\newcommand\vldbdoi{XX.XX/XXX.XX}
\newcommand\vldbpages{XXX-XXX}
\newcommand\vldbvolume{14}
\newcommand\vldbissue{1}
\newcommand\vldbyear{2020}
\newcommand\vldbauthors{\authors}
\newcommand\vldbtitle{\shorttitle} 
\newcommand\vldbavailabilityurl{URL_TO_YOUR_ARTIFACTS}
\newcommand\vldbpagestyle{plain} 
\newcommand{\direct}{{\sc histogram-based}\xspace}
\newcommand{\randomwalk}{{\sc random-walk}\xspace}
\newcommand{\onlineunion}{{\sc online-union}\xspace}
\newcommand{\todo}[1]{\textcolor{green}{TODO: #1}}
\newcommand{\draco}[1]{\textcolor{black}{#1}}
\newcommand{\yl}[1]{\textcolor{black}{#1}}
\newcommand{\fn}[1]{\textcolor{black}{#1}}
\newcommand{\eat}[1]{}
\theoremstyle{plain}
\newtheorem{definition}{Definition}
\newtheorem{example}{Example}
\newtheorem{theorem}{Theorem}
\algrenewcommand\algorithmicrequire{\textbf{Input:}}
\algrenewcommand\algorithmicensure{\textbf{Output:}}
\newcommand{\squishlist}{
 \begin{list}{$\bullet$}
  { \setlength{\itemsep}{0pt}
     \setlength{\parsep}{1pt}
     \setlength{\topsep}{1pt}
     \setlength{\partopsep}{0pt}
     \setlength{\leftmargin}{1em}
     \setlength{\labelwidth}{1em}
     \setlength{\labelsep}{0.5em} } }
\newcommand{\squishend}{\end{list}
}
\begin{document}
\title{Sampling over Union of Joins}

\author{Yurong Liu}
\authornote{These authors contributed equally}
\orcid{}
\affiliation{
  \institution{}
  \streetaddress{}
  \city{University of Rochester}
  \state{}
  \country{}
  \postcode{}
}
\email{yliu217@u.rochester.edu}

\author{Yunlong Xu}
\orcid{}
\authornotemark[1]
\affiliation{
  \institution{}
  \streetaddress{}
  \city{University of Rochester}
  \state{}
  \country{}
  \postcode{}
}
\email{yxu103@u.rochester.edu}

\author{Fatemeh Nargesian}
\affiliation{%
  \institution{University of Rochester}
  \streetaddress{}
  \city{}
  \country{}}
  \email{fnargesian@rochester.edu}

\begin{abstract} 
Data scientists often draw on multiple relational  data sources for analysis. A standard assumption in  learning and approximate query answering is that the data is a uniform and independent sample of the underlying distribution. 
To avoid the cost of join and union, given a set of joins, we study the problem of obtaining a random sample from the union of joins without performing the full join and union. 
We present a general framework for random sampling over the  set  union of chain, acyclic, and cyclic joins, with sample uniformity and independence guarantees. We study the novel problem of union of joins size evaluation and propose two approximation  methods based on histograms of columns and random walks on data.  We propose an online union sampling framework that initializes with  cheap-to-calculate parameter approximations   and refines them on the fly during sampling. 
We evaluate our framework on workloads from the TPC-H benchmark and explore the trade-off of the accuracy of union approximation  and sampling efficiency. 
\end{abstract}

\maketitle

\pagestyle{\vldbpagestyle}
\begingroup\small\noindent\raggedright\textbf{PVLDB Reference Format:}\\
\vldbauthors. \vldbtitle. PVLDB, \vldbvolume(\vldbissue): \vldbpages, \vldbyear.\\
\href{https://doi.org/\vldbdoi}{doi:\vldbdoi}
\endgroup
\begingroup
\renewcommand\thefootnote{}\footnote{\noindent
This work is licensed under the Creative Commons BY-NC-ND 4.0 International License. Visit \url{https://creativecommons.org/licenses/by-nc-nd/4.0/} to view a copy of this license. For any use beyond those covered by this license, obtain permission by emailing \href{mailto:info@vldb.org}{info@vldb.org}. Copyright is held by the owner/author(s). Publication rights licensed to the VLDB Endowment. \\
\raggedright Proceedings of the VLDB Endowment, Vol. \vldbvolume, No. \vldbissue\ %
ISSN 2150-8097. \\
\href{https://doi.org/\vldbdoi}{doi:\vldbdoi} \\
}\addtocounter{footnote}{-1}\endgroup

\ifdefempty{\vldbavailabilityurl}{}{
\vspace{.3cm}
\begingroup\small\noindent\raggedright\textbf{PVLDB Artifact Availability:}\\
The source code, data, and/or other artifacts have been made available at \url{https://github.com/DataIntelligenceCrew/sample-union-joins.git}.
\endgroup
}

\section{Introduction}
\label{sec:intro}

Data scientists often draw on multiple sources to collect training data. 
Since most relational sources are not stored as single tables due to normalization, users  often need to perform joins before learning on the join output~\cite{ChenKNP17}. Moreover, data may be collected from distributed sources, each described as a join over internal databases,  data lakes, or web data~\cite{AsudehN22,ChepurkoMZFKK20}. Therefore, the target data is the  union of the  results of joins. 

Joins are expensive  and learning after joins leads to poor training performance due to the introduced 
redundancy avoided by normalization. There have  been efforts to  
enable learning over joins, but they are limited to   certain models, including  linear regression, and cannot be applied to more general models~\cite{KumarNP15,KumarNPZ16,SchleichOC16,ChenKNP17}. 
Fortunately, an important result~\cite{Vapnik:71} from the learning theory 
suggests that learning and approximate query answering~\cite{LiL18}  do not require the full  results and an i.i.d sample can achieve a bounded error. This result holds for any  model. Therefore, in data collection  from multiple  sources, the question to ask is how to obtain a sample from union of sources  without executing join and union. 
Given a collection of joins, the goal is to return a sample of $N$ tuples from the union of the results of joins, independently and at random.  

\begin{figure}[!tbp]
\small
\noindent\begin{minipage}[t]{\columnwidth}
\begin{flalign*}
J\_W&:~
{\bf Customer\_W}(\cdots) \bowtie {\bf Part1\_W}(\cdots) \bowtie {\bf Supplier1\_W}(\cdots) \bowtie &&\\
&{\bf PartSupp1\_W}(\cdots) \bowtie {\bf PartSupp2\_W}(\cdots) \bowtie &&\\
&{\bf LineItem1\_W}(PartKey1,\cdots) \bowtie {\bf Orders1\_W}(OrderKey1, \cdots) \bowtie\\ 
&{\bf Orders2\_W}(OrderKey1, \cdots) \bowtie {\bf LineItem2\_W}(PartKey1,\cdots)\\
\end{flalign*}
\vspace{-7mm}
\begin{flalign*}
\bigcup~J&\_E:~
{\bf Customer\_E}(\cdots) \bowtie {\bf PartSupplier\_E1}(\cdots) \bowtie\\
&{\bf PartSupplier\_E2}(\cdots) \bowtie\\ 
& {\bf DoubleOrders\_E}(OrderKey1, OrderKey2, \cdots) \bowtie &&\\
& {\bf LineItem\_E1}(OrderKey1,\cdots) \bowtie {\bf LineItem\_E2}(OrderKey2,\cdots)\\ 
\end{flalign*}
\vspace{-7mm}
\begin{flalign*}
\bigcup~J&\_MW:~ 
 {\bf Customer\_MW}(\cdots) \bowtie {\bf DoublePartSupplier\_MW}(\cdots) \bowtie\\ 
 & {\bf DoubleOrdersLineItem\_MW}(\cdots)\\
\end{flalign*}
\vspace{-7mm}
    \caption{Example of Union of Joins on Denormalized TPC-H.}
    \label{fig:unionexample}
\end{minipage}
\vspace{-5mm}
\end{figure}

\begin{example} \label{ex:bundlepromo}
Suppose a data scientist in an online retail  company wants to 
train a model for applying a promotion 
to the future bundle orders of customers. 
To do so, the data scientist needs a random and independent sample of size $k$ of customer data and their bundle purchases from the underlying distribution. 
Suppose the customer order data is stored in various databases, each having its own schema. 
For example, the company may have one database for suppliers of each east, west, and midwest region. 
Obtaining customer-bundle data requires constructing a query for each region  database, as shown in Fig.~\ref{fig:unionexample}, then  unioning the results of  queries.  Since there is no single relation that contains all the required features, 
these queries need to join data from  various relations. 
Note that   $J_W$ is a cyclic join and  $J_E$ and $J_{MW}$   are acyclic. In $J_W$, relation   {\tt\small Orders}  is  self-joined to obtain the information of  items in the same order (bundle purchases). All three joins have the same output schema. 
\eat{All three joins have unionable result schemas, i.e., there is a one-to-one mapping between their attributes. The attribute names have been renamed for better readability.} 
To construct the target dataset, 
the first challenge is  
although some of these queries are performed on heavily denormalized relations (or views), for example, {\tt\small PartSupplier} relation in $J_E$, 
since some base relations, for example, the {\tt\small LineItem} and {\tt\small Orders}, are very large, performing a full join becomes very expensive. 
\end{example}

The problem of random sampling over a single join has been actively studied  since the 1990s~\cite{joinsynopses}. 
The goal is to obtain a random and independent sample from join $J$, without performing the full join, such that the probability of each tuple in the sample is $1/|J|$.  
One solution is to join samples of base relations to obtain sample join tuples~\cite{joinsynopses}. 
However, the join of samples produces a much smaller number of join tuples than samples. Moreover, it is shown that the obtained join samples do not guarantee  independence~\cite{HuangYPM19}. 
For approximate query answering, 
some techniques such as RippleJoin~\cite{ripple} and WanderJoin~\cite{2016_wander_join} manage to 
use non-random/independent and random/dependent samples, respectively.    
Other techniques for sampling over join apply the accept/reject sampling paradigm to guarantee i.i.d~\cite{Olken,Chaudhuri:1999}. 
The most recent work by Zhao et al. proposes a framework for  sampling 
over one join that handles general multi-way joins~\cite{2018_sample_join_revisit}. 
The motivation of random sampling over join is tightly connected to join size estimation which  has also  been a point of interest in the database community due to its application to query optimization~\cite{olken1995random,Chaudhuri:1999,agm}.

\begin{example} Continuing with Ex.~\ref{ex:bundlepromo}, the second challenge is to union join samples such that {\em uniformity} is guaranteed, i.e., each tuple has the probability $\frac{1}{|J_W\cup J_E\cup J_{MW}|}$ of being in the final sample. A naive solution 
is to union samples of joins, obtained in an offline manner. 
Suppose we apply an off-the-shelf sampling over join algorithm and 
obtain samples $S_W$, $S_E$, and  $S_{MW}$ from $J_W$, $J_E$, and  $J_{MW}$, respectively. We have $P(t\in S_W)=1/|S_W|$,  $P(t^\prime\in S_E)=1/|S_E|$, and  $P(t^{\prime\prime}\in S_{MW})=1/|S_{MW}|$. It is easy to show that $U = J_W\cup J_E\cup J_{MW}$ does not guarantee uniformity and tuples have unequal probability of appearing in $U$.  
Consider the contradicting example of $r\in S_E, r\notin S_W, r\notin S_{MW}$, 
we have $P(r\in U) = 1/|S_E|$, however, if $r\in S_E\cap S_W\cap S_{MW}$, 
we get $P(r\in U) = (\frac{1}{|J_E|}+\frac{1}{|J_W|}+\frac{1}{|J_{MW}|})\cdot\frac{|S_E\cap S_W\cap S_{MW}|}{|U|}$, because we do set union and keep one instance of overlapping tuples.  
An accept/reject sampling algorithm can help to adjust this probability to obtain $1/|U|$, however, as we show in \S~\ref{sec:setunion}, 
the algorithm  needs to know {\em apriori}, the size of each join and their union, which requires the overlap size of all combinations of  $J_E$, $J_W$, and $J_{MW}$.    
One idea may be to estimate the overlaps and unions from the samples. 
However, that would not be a viable option, since just like joining samples or relations, the probability of obtaining samples from the overlapping regions of joins is low. 
\end{example}

\eat{
Note that since accessing the full  result of the union of joins, without performing the joins  and union, is infeasible, the size of the union and the size of overlap of joins are unknown. 
Suppose we apply an off-the-shelf sampling over the join algorithm and 
obtain samples of the same size from these joins. Let the samples be: $S_W = \{t_1,t_2,t_3,t_4\}$, $S_E = \{t_1,t_3,t_4,t_5\}$, $S_{MW} = \{t_4,t_6,t_7,t_8\}$. Suppose the sizes of joins $J_W$, $J_E$, and $J_MW$ are estimated as $200$, $100$, and $500$, respectively. This means each tuple in $S_W$, $S_E$, and $S_MW$ is sampled from the corresponding join, with probability $1/200$, $1/100$, $1/50$, respectively. 
The union of join samples is $U = \{t_1,t_2,t_3,t_4,t_5,t_6,t_7,t_8\}$. 
Recall we would like to give each customer bundle an equal chance of being selected. 
It is easy to show that $U$ is not a random sample, i.e.,  tuples have an unequal probability of appearing in $U$ 
For example, $P(t_2\in U) = \frac{1}{200}\times\frac{1}{8}$ while $P(t_4\in S_U) = \frac{1}{200}\times\frac{1}{8}+\frac{1}{100}\times\frac{1}{8}+\frac{1}{500}\times\frac{1}{8}$. 
}
\eat{
\begin{example} Continuing with Ex.~\ref{ex:bundlepromo}, a naive solution to sampling over a union of joins is to sample from the union of join samples. Note that since accessing the data result of the union of joins, without performing the join and union operations, is infeasible, the size of the union and the size of the overlap of joins are unknown. 
Suppose we apply an off-the-shelf sampling over the join algorithm and obtain a sample from each join. Let the samples be: $S_W = \{t_1,t_2,t_3,t_4\}$, $S_E = \{t_1,t_3,t_4,t_5\}$, $S_{MW} = \{t_4,t_6,t_7,t_8\}$. Suppose the sizes of joins $J_W$, $J_E$, and $J_MW$ are $200$, $100$, and $500$, respectively. This means each tuple in $S_W$, $S_E$, and $S_MW$ is sampled from the corresponding join, with probability $1/200$, $1/100$, $1/50$, respectively. 
We obtain a random sample $S_U$ from the set union of these samples, $U = \{t_1,t_2,t_3,t_4,t_5,t_6,t_7,t_8\}$.  Recall we would like to give each customer bundle  an equal chance of being selected.  
It is easy to show that the sample is not random.  
For example, $P(t_2\in S_U) = \frac{1}{200}\times\frac{1}{8}$ while $P(t_4\in S_U) = \frac{1}{200}\times\frac{1}{8}+\frac{1}{100}\times\frac{1}{8}+\frac{1}{500}\times\frac{1}{8}$. 
\end{example}}

In this paper, we present a generic  framework for random sampling over  the union of joins. 
\fn{In particular,  we consider sampling set union with replacement. Sampling from the disjoint union is a straightforward extension of the  set union. The classic join sampling~\cite{Chaudhuri:1999,olken1995random} and the recently revisited framework~\cite{2018_sample_join_revisit} consider random sampling {\em with replacement} over join.   
Another relevant problem is the random enumeration of the result of the union of acyclic conjunctive queries~\cite{CarmeliZBKS20}. The intermediate results of a random query result enumeration algorithm can be considered as a random sample from the union {\em without replacement} which is different than our problem. Moreover, in this paper, we study union sampling over a larger class of joins (chain, cyclic, and acyclic).  In \S~\ref{sec:complexity}, we provide an elaborate discussion and analytical comparison of this line of work with our framework.}   

There are several challenges to addressing the sampling over the union of joins problem. 
First, unioning random samples from joins does not guarantee uniformity. Our solution is an  accept/reject sampling algorithm that defines  Bernoulli and non-Bernoulli  probability distributions for selecting joins. The latter mimics the behavior of union calculation. Second, it turned out that to guarantee uniformity, the sampling framework  needs to know the size of each join and the size of the union of joins apriori. Although the problem of set union size approximation~\cite{KarpLM89,BringmannF08,AlonMS96} and its online extension to streams~\cite{ChakrabortyMV16, DagumKLR00, KarpLM89, MeelSV17} have been extensively studied in the approximate counting literature,  to the best of our knowledge,  there is no study that addresses the problem of approximating the union  size of joins without performing the full join and overlap. 

Third, \direct estimation requires knowing the  overlap of an exponential number of sets of joins, each set in the powerset of joins. We reduce the space of calculation by reformulating the problem to use smaller-unit statistics, called $k$-overlaps, of each join, which is the size of the subset of a join result that is shared with {\em exactly} $(k-1)$ other joins. 
\fn{Next, we propose two instantiations of the framework for estimating the overlap of joins with an  arbitrary number of relations and  all join types (chain, cyclic, and acyclic): 
a \direct method and a \randomwalk method. The \direct technique is cheap and requires knowing limited  statistics of joins. It may incur  a loose bound, thus, a high rejection rate, under circumstances. The \direct method is highly suitable for data in the wild or scenarios, such as data markets, where limited metadata is available but access to the whole data is infeasible. 
The \randomwalk method  is accurate in estimating parameters and results in low delay. It needs sampling for parameters warm-up  and provides theoretical guarantees.  
To balance the trade-off of parameter estimation cost and sampling efficiency, we propose an \onlineunion sampling algorithm that initializes and updates  parameters with the \direct and \randomwalk methods, respectively, and reuses the samples obtained during \randomwalk while ensuring uniformity.}

\noindent In this paper, we make the following contributions: 
\squishlist
    \item We present the problem of random sampling over the union of joins. 
    \item We design a framework for sampling over the set union of joins of  types chain, cyclic, and acyclic  (\S~\ref{sec:setunion}). Any instantiation of the framework always returns uniform and independent samples from the full  result (Theorem~\ref{th:setunionsampling})  
    but with  different sampling efficiency (\S~\ref{sec:onlineoverlap}). 
    \item We design \direct (\S~\ref{sec:joinoverlap}) and  \randomwalk (\S~\ref{sec:online}) methods to bound the size of overlap of any collection of chain, acyclic, or cyclic joins. 
    \item We present an \onlineunion sampling technique that balances the latency and warm-up cost trade-off (\S~\ref{sec:onlineoverlap}). 
    \item We perform extensive experimental evaluations using the TPC-H benchmark to investigate the error and runtime of parameter estimation and sampling methods(\S~\ref{sec:eval}). We also evaluate the scalability of our framework with respect to relation size, number of samples, and overlap size. 
\squishend

\section{Problem Definition}
\label{sec:problemdef}

Let $\mathcal{A}$ be the universe of attributes and $\mathcal{A}_i$ be the  attributes in relation $J_i$. We are given a set of 
joins $S = \{J_1, \ldots, J_n\}$. A join $J_j$ is defined as $J_j = R_{j,1} \bowtie_{A_{j,1}} R_{j,2} \bowtie_{A_{j,2}} \cdots \bowtie_{A_{j,n_1-1}} R_{j,n_1}$, where  $R_{j,1}, \cdots, R_{j,n_1-1}$ are base relations. 
Similar to relational algebra,  we assume all joins have the same output schema after performing the join in terms of the number and name of attributes. Note that joins can still have different lengths and different  relations. We also  assume that join attributes are standardized to have the same names. We only mention attribute names when needed. In relational algebra, there are two types of unions: set union and disjoint union. The former eliminates duplicate tuples from the result of a union and the latter keeps the duplicates. 
The notion of unionability~\cite{NargesianZPM18} can be applied on base relations to align attributes such that  joins incur the same schema.

The problem of sampling over a union of joins is to return each tuple with probability $1/|union(J_1,\cdots,J_n)|$, 
where union may be set or disjoint union.  
Returning just one sampled tuple is usually not enough,
therefore, we would like to generate totally independent sampled tuples continuously until a certain desired sample size $N$ is reached. 
We formulate the sampling set union and disjoint problems as follows.

\eat{
\begin{table}[!tb]
\small
    \begin{minipage}[t]{.11\linewidth}
      \caption*{$R_{1,1}$}
      \centering
        \begin{tabular}{c}
            $A_1$\\
            \hline
            1\\
            2\\
            3\\
        \end{tabular}
    \end{minipage}%
    \begin{minipage}[t]{.17\linewidth}
      \centering
        \caption*{$R_{1,2}$}
        \begin{tabular}{c|c}
            $A_1$ & $B_1$\\
            \hline
            1 & 2\\
            2 & 2\\
            2 & 3\\
            3 & 5\\
            3 & 6\\
            4 & 6\\
        \end{tabular}
    \end{minipage} 
    \begin{minipage}[t]{.17\linewidth}
      \centering
        \caption*{$R_{1,3}$}
        \begin{tabular}{c|c}
            $B_1$ & $C_1$\\
            \hline
            3 & 1\\
            3 & 2\\
            3 & 3\\
            6 & 4\\
        \end{tabular}
    \end{minipage} 
    \begin{minipage}[t]{.11\linewidth}
      \caption*{$R_{2,1}$}
      \centering
        \begin{tabular}{c}
            \fn{$A_1$}\\
            \hline
            1\\
            3\\
            5\\
            6\\
        \end{tabular}
    \end{minipage}%
    \begin{minipage}[t]{.17\linewidth}
      \centering
        \caption*{$R_{2,2}$}
        \begin{tabular}{c|c}
            \fn{$A_1$} & \fn{$B_1$}\\
            \hline
            1 & 2\\
            2 & 4\\
            2 & 5\\
            3 & 6\\
            5 & 5\\
            6 & 6\\
        \end{tabular}
    \end{minipage} 
    \begin{minipage}[t]{.17\linewidth}
      \centering
        \caption*{$R_{2,3}$}
        \begin{tabular}{c|c}
            \fn{$B_1$} & \fn{$C_1$}\\
            \hline
            2 & 1\\
            2 & 2\\
            5 & 3\\
            6 & 4\\
            7 & 5\\
        \end{tabular}
    \end{minipage} 
    \begin{minipage}[t]{.45\linewidth}
      \centering
        \caption*{$J_1 = R_{1,1} \bowtie R_{1,2} \bowtie R_{1,3}$}
        \begin{tabular}{c|c|c}
            $A_1$ & $B_1$ & $C_1$\\
            \hline
            2 & 3 & 2\\
            2 & 3 & 3\\
            3 & 6 & 4\\
        \end{tabular}
    \end{minipage}
    \begin{minipage}[t]{.45\linewidth}
      \centering
        \caption*{$J_2 = R_{2,1} \bowtie R_{2,2} \bowtie R_{2,3}$}
        \begin{tabular}{c|c|c}
            \fn{$A_1$} & \fn{$B_1$} & \fn{$C_1$}\\
            \hline
            1 & 2 & 1\\
            3 & 6 & 4\\
            1 & 2 & 2\\
            5 & 5 & 3\\
            6 & 6 & 4\\
        \end{tabular}
    \end{minipage}
    \begin{flushleft}
    $J_1\uplus J_2 = \{(2,3,2),(2,3,3),(3,6,4),(1,2,1),(3,6,4),(1,2,2),(5,5,3),(6,6,4)\}$\\
    $J_1\cup J_2 = \{(2,3,2),(2,3,3),(3,6,4),(1,2,1),(1,2,2),(5,5,3),(6,6,4)\}$
    \end{flushleft}
    \caption{Join results. \todo{what content we lose if we eliminate this fig?}}
    \label{tab:joinpathexample}
    \vspace{-5mm}
\end{table}
}

\begin{definition} {\em{\bf (Sampling Disjoint Union of Joins)}} Given a set of joins $S = \{J_1, \ldots, J_n\}$, 
return $N$ independent samples from $V = J_1\uplus\ldots\uplus J_n$ such that each sampled tuple is returned with probability $\frac{1}{|V|} = \frac{1}{J_1+\ldots+J_n}$.  
\end{definition}

\fn{Sampling from the disjoint union is straightforward. Given the disjoint union $V = J_1\uplus\ldots\uplus J_n$,  we first select a join  $J_j$ with probability $P(J_j) = \frac{|J_j|}{|J_1+\ldots+J_n|}$, then, we  select a random tuple from $J_j$. This means the probability of each sampled tuple $t$ is $P(t) = \frac{|J_j|}{|V|} \cdot \frac{1}{|J_j|} = \frac{1}{|V|}$. We repeat the process until $N$ sampled tuples are obtained. 
This algorithm always returns independent samples because a returned sample is always uniform regardless of the previous sampling iterations. 
Methods of sampling a tuple from a single join  have long been a popular problem~\cite{Olken,Chaudhuri:1999,2018_sample_join_revisit,viswanath1998join,Chaudhuri:1999}. We revisit random sampling over join in \S~\ref{sec:samplingjoin}. }

The set union operation eliminates duplicate tuples from the result of the union. As such, an i.i.d sampling algorithm  over the set union should return each tuple in the universe of the set union  with the probability of the size of a set union. 

\begin{definition} {\em{\bf (Set Union of Joins Sampling)}} Given a set of joins $S = \{J_1, \ldots, J_n\}$, let $\mathcal{U}$  be the discrete space of unique tuples in $U = J_1\cup\ldots\cup J_n$. Return $N$ independent samples from $\mathcal{U}$, such that each sampled tuple is returned with probability $\frac{1}{\vert J_1\cup\cdots\cup J_n \vert}$. 
\end{definition}

\eat{
\begin{table}[]
    \centering
    \begin{tabular}{c|p{0.7\linewidth}}
    \hline
    $J_j$ & chain join $J_j = R_{j,1} \bowtie_{A_{j,1}} R_{j,2} \bowtie_{A_{j,2}} \cdots \bowtie_{A_{j,n_1-1}} R_{j,n_1}$\\
    \hline
    $R_{j,i}$ & relation $i$ in join $J_j$\\
    \hline
    $A_i$ & set of attributes
    $R_{i}$ and $R_{i+1}$ join on in $J_j$\\
    \hline
    $S$ & set of joins $S = \{J_1,\dots,J_n\}$\\
    \hline
    $V$ & disjoint union of joins $V = J_1\sqcup\ldots\sqcup J_n$\\
    \hline
    $U$ & set union of joins $U = J_1\cup\ldots\cup J_n$\\
    \hline
    $\mathcal{A}_j^k$ & set of tuples of $k$-th overlapped in $J_j$\\
    \hline
    $\mathcal{O}_{\Delta}$ & set of overlap tuples of joins in set $\Delta \subset S$\\
    \hline
     $M_A(R_i)$ & maximum degree of values in $R_i$ on attribute $A$\\
     \hline
     $\mathcal{K}(i)$  & upper bound of number of overlapping tuples after the $i$th join in joins\\
     \hline
     $\mathcal{K}(v, i)$ & upper bound of number of overlapping tuples with $v$ on $A_{i,1}$ after the $i$th join in joins\\
     \hline
     $\mathcal{D}_{A}(R)$ & domain of values of attribute $A$ in relation $R$
    \end{tabular}
    \caption{Notations}
    \label{tab:notations}
    \vspace{-5mm}
\end{table}
}

\fn{\section{A Union Sampling Framework}}
\label{sec:setunion}

Let $U$ be the universe of tuples in  the set union of joins. We assume there are no duplicates in each join. 
Given the set union $U = \bigcup_{j = 1}^n J_j$, we want for each value $u\in U$,  $P(t=u)=\frac{1}{\vert U \vert}$. 
\begin{example} Consider joins $J_1$ and $J_2$  that have the same output schema.  Suppose  $t_1 = (3,6,4) \in J_1$ and $t_2 = (3,6,4) \in J_2$. The value 
of each tuple $t$, namely $t.val$,  can be obtained by concatenating its attribute values using a standard convention.  Then, by the definition of a set, $t_1$ and $t_2$ refer to the same tuple, say $u$, in the universe $U = J_1 \cup J_2$. We want $P(u)$, the probability of selecting a tuple with value $u$ from $U$, to be $\frac{1}{|U|}$. Tuples $t_1$ and $t_2$ are distributed in different joins. 
Hence, $u$  is obtained if $t_1$ or $t_2$ are sampled from their corresponding joins. That is, we want $P(t=u) = P(t_1) + P(t_2) = \frac{1}{|U|}$. Note that we may have a sampling with replacement or we may get both $t_1$ and $t_2$ in the sample. Our framework guarantees that $P(t=u) = P(t_1.val) = P(t_2.val) = \frac{1}{|U|}$, whether we choose to remove duplicates or not. 
\end{example}

At each sampling iteration, the framework performs two steps: join selection and join random sampling. The framework continuously samples tuples, with replacement, with $1/|U|$ probability,  until the desired sample size $N$ is reached.  
A straightforward way is based on the union trick~\cite{DurandS11}. At each iteration, we iterate  through all  joins and select a  join  
with the Bernoulli probability $P(J_j) = |J_j|/|U|$. This means multiple joins may be selected in each iteration. 
Upon selecting $J_j$, we randomly sample a tuple $t$ from $J_j$ with replacement.  Recall  $u = t.val$ denotes the value of tuple $t$. 
We accept  tuples with duplicate values $u$, 
only if they are sampled from the same joins, otherwise, we accept the tuples. This means a duplicate tuple $t$ is  retained only if it is sampled from the first join where  $u = t.val$ was observed.  With this description, a tuple value $u\in U$ is returned upon first selecting a join $J_j$ that contains $u$ with probability $|J_j|/|U|$, then sampling $J_j$ with probability $1/|J_j|$. This guarantees that  every  value $u\in U$ is returned with probability $\frac{|J_j|}{|U|}~.~\frac{1}{|J_j|}=\frac{1}{|U|}$. 
 Despite its simplicity, this algorithm has a high rejection ratio for highly overlapping joins and may result in high latency. 
\draco{This is attributed to the utilization of a two-phase framework, which is essential for ensuring uniformity in sampling. Next, we describe a join selection algorithm with a more careful selection of joins. In \S~\ref{sec:onlineunion}, we propose a novel approach that leverages computation performed in the first phase to reduce latency in the second stage. }

\eat{
\subsection{Bernoulli Case}
\label{sec:bernoulli}

Algorithm~\ref{algB} shows the pseudocode for sampling from a set union of joins using a Bernoulli distribution. 
At every round, the algorithm iterates through all  joins and selects a  join  
with probability $P(J_j) = |J_j|/|U|$. 
Upon selecting $J_j$, we randomly sample a tuple $t$ from $J_j$ with replacement.  Recall  $u = t.val$ denotes the value of tuple $t$. Let $J(u)$ be  the join from which $u$ is sampled for the first time.  Suppose it is the first time some tuple with value $u$ is obtained, then we assign $J(u) = J_j$ and accept the tuple; if $u$ has been sampled before,  we only accept the tuple if $J(u) = J_j$, and reject it otherwise. We now prove that this algorithm returns every tuple value $u$ in $U$ with probability $\frac{1}{|U|}$.

\begin{theorem}\label{th:setunionsampling}
Given joins $S=\{J_1,\ldots, J_n\}$, Algorithm~\ref{algB} returns each  tuple $t$ with value $u$ with probability 
$\frac{1}{|J_1\cup\ldots\cup J_n|}$. 
\end{theorem}
\begin{proof}
Let $J(u)$ be the  join  from which  a tuple with value $u$ is selected for the first time. Let $\Delta(u)$ be the set of all joins that contain tuples with value $u$. Since a tuple with value $u$ can be obtained from any joins in   $\Delta(u)$, we have the following.
$$P(t = u) = \sum_{J_j \in \Delta(u)} P(J_j) \cdot P(t = u \vert J_j)$$ 
The algorithm only accepts $t = u$ if it is sampled from the same join it is sampled from the first time, i.e. $J(u)$. Therefore, we can rewrite the probability of $P(t = u)$ according to the three cases. 
\begin{align*}
    P(t = u) &= \frac{|J(u)|}{|U|} \cdot \frac{1}{|J(u)|} + \sum_{J_j \in \Delta(u)\backslash J(u)} \frac{|J_j|}{|U|} \cdot \frac{1}{|J_j|} \cdot 0 + \sum_{J_j \in S\backslash \Delta(u)} 0\\
    &= 
    \frac{1}{|U|}
\end{align*}
\end{proof}

\begin{algorithm}
\caption{Bernoulli Set Union Sampling}
\begin{small}
\begin{algorithmic}[1]
\Require{Joins $S = \{J_j, 1\ \leq j \leq n\}$, tuple count $N$}
\Ensure{Tuples $\{t_i, 1 \leq i \leq N\}$}
\State $\{|J_j|, 1\ \leq j \leq n\}, |U| \leftarrow warmup(S)$
\Comment{direct (\S~\ref{sec:joinoverlap}) or online (\S~\ref{sec:online})}
\State $T \leftarrow \{\}$
\Comment{target sample}
\While{$n < N$}
\State $round\_recs \leftarrow \{\}$
\For {$J_j\in S$ select $J_j$ with probability $\frac{|J_j|}{|U|}$}
\While{$t \in round\_recs$} 
\State $t \leftarrow$ a random sample from $J_j$
\Comment{\S~\ref{sec:samplingjoin}}
\EndWhile
\If{$t \in S$ and $J(t.val) \ne J_j$} reject $t$
\Else 
\State accept and $round\_recs\leftarrow round\_recs \cup \{t\}$
\State $J(t.val) = J_j$, $n \leftarrow n + 1$
\EndIf
\EndFor
\If{$n > N$}\Comment{control the final steps}
\State $n \leftarrow n - |round\_recs|$, $round\_recs \leftarrow \{\}$
\EndIf
\State $T\leftarrow T \cup round\_recs$
\EndWhile
\State \textbf{return} $T$
\end{algorithmic}
\label{algB}
\end{small}
\end{algorithm}

There are two situations worth discussing. First, 
in every round, there exists a possibility that more than one join is selected and more than one tuple with value $u$ is sampled from different joins for the first time. 
As a result, the sampled value $u$ would have two values for $J(u)$, therefore, we select one of these joins and discard all sampled tuples with $u$ in this round except the one from the 
selected join and re-sample a new tuple from each of other joins (line 7). 
We use $round\_recs$ to keep track of this information. 
Second, when we may get more than $N$ tuples in the sample after the last round. For example, we have already gathered $N-k$ tuples, but in the last iteration, we sample more than $k$ tuples. In this case, we discard all sampled  tuples in this round (line 11-12)  and continuously run new rounds 
until we exactly sample $k$ new tuples from the joins in a round. 
}

\subsection{Non-Bernoulli Join Selection}
\label{sec:nonbernoulli}

\eat{An alternative way to the above approach  is to probabilistically 
select only one join  in each round. 
This way we can obtain exactly $N$ samples and can avoid discarding samples and repeating the last round. 
The goal is to keep samples from an overlap area only when they are sampled  from exactly one predetermined join and discard otherwise.}

\fn{The above technique keeps samples from an overlap area of joins only if they are sampled  from exactly one predetermined join. Consider  two  joins $J_1$ and $J_2$ with overlapping data region $B$ in Fig.~\ref{fig:union_op_bin}. We select and keep any sample $t_1\in J_1$. 
Later, upon selecting $J_2$,  if we sample a $t_2\in B$, we reject $t_2$. The trick to avoiding rejection is to keep the $B$ from $J_1$ as the only space we sample from $J_1$.}  
Therefore, we have  $P(J_1)=\frac{|A+B|}{|A+B+C|}=\frac{|J_1|}{|U|}$, and $P(J_2)=\frac{|C|}{|A+B+C|}=\frac{|J_2|-|B|}{|U|}$. 

Our join selection is outlined in Algorithm~\ref{algNB}. 
Prior to sampling, the algorithm needs to decide which overlapping region is restrictively  sampled from which join. We call this division of joins a {\em cover} (line 2 of Algorithm~\ref{algNB}).    
A cover over joins $S = \{J_1, \cdots, J_n\}$, namely $C = \{J_1^\prime, \cdots, J_n^\prime\}$, is an ordering over $S$ such that $J^\prime_i = \{t\in J_i\vert t\notin\bigcup_{j<i}J^\prime_j\}$. In fact, a cover $J_i^\prime$ of join $J_i$ is  a selection query over join  $J_i$. A cover of $S$ can be created by starting from the first join and keeping or removing overlapping parts.  Fig.~\ref{fig:union_op_gen} illustrates an example of a cover for three overlapping joins. Given a cover $C$, to  calculate the size of $J^\prime_i$,  
we simply follow the inclusion–exclusion principle. Let $O_{\Delta} = \bigcap_{J_j\in\Delta}J_j$ and $S_i$ represent the set of joins that \fn{appear before $J_i$ in the ordering offered by $C$, then we have the following.} 
\begin{equation*}
\label{eq:jprime}
    |J^\prime_i| = |J_i| + \sum_{m = 1}^{i-1}~\sum_{\substack{\Delta \subset S_i, |\Delta| = m}} (-1)^{m} |O_{\Delta} \cup \{J_i\}|
\end{equation*}

Based on this cover, each $J_i$ is selected with  $P(J_i) = \frac{|J_i^\prime|}{|U|}$. When sampling, we should always follow the cover we pre-defined, i.e., for any sample $t\in J_i$, we should discard it if  $t \notin J_i^\prime$.  
\fn{However, if we do not have overlap information apriori, upon selecting $J_i$ and sampling $t$, it is not possible to verify whether $t$ is in  $J^\prime_i$ or not.  Thus, we face a non-trivial case  when we sample $t\in J_i\setminus J_i^\prime$.} 
If we later sample $t$ from $J_j$ \fn{with $J^\prime_j\cap J_i\neq 0$,} i.e., $J_j^\prime$ covers the overlapping part  with $J_i$, 
we should do a critical operation, called {\em revision}. 
This means we remove $t\in J_j$ from the sample 
and re-sample $J_j$, while keeping the $t$ from $J_i$. 

\begin{example} \fn{Consider joins $J_1$, $J_2$, and $J_3$ of Fig.~\ref{fig:union_op_gen}. A cover for these joins are highlighted with blue, red, and green colors. The algorithm selects $J_1$, $J_2$, and $J_3$ with probability $|J^\prime_1|/|U|$, $|J^\prime_2|/|U|$, and $|J^\prime_3|/|U|$, respectively. Suppose at some iteration we  have selected $J_2$ and sampled $t\in J_2\setminus J^\prime_2$. Suppose now we select $J^\prime_1$ and sample the same $t$. Because the cover tells us to sample $J_2$ only from $J^\prime_2$ area, we 
} remove $t$'s from the target set, accept $t$ that's sampled from $J_1$ and assign it to $J_1$ in the record. 
\end{example}


\begin{figure}[!ht]
    \begin{subfigure}[t]{0.32\linewidth}
        	\centering
            \includegraphics[width=\linewidth]{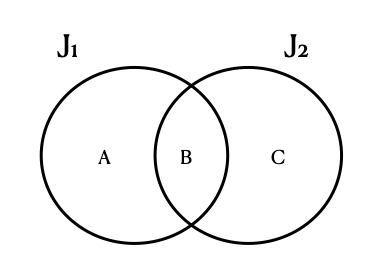}
        	\vspace{-6mm}
        	\caption{}
            \label{fig:union_op_bin}
    \end{subfigure}
    \hfill
    \begin{subfigure}[t]{0.32\linewidth}
        	\centering
            \includegraphics[width=\linewidth]{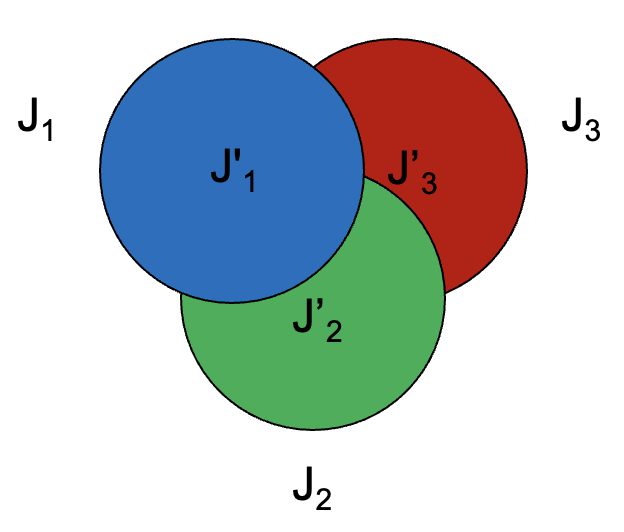}
        	\vspace{-6mm}
        	\caption{}
            \label{fig:union_op_gen}
    \end{subfigure}
    \begin{subfigure}[t]{0.32\linewidth}
    \centering
     \includegraphics[width=\linewidth]{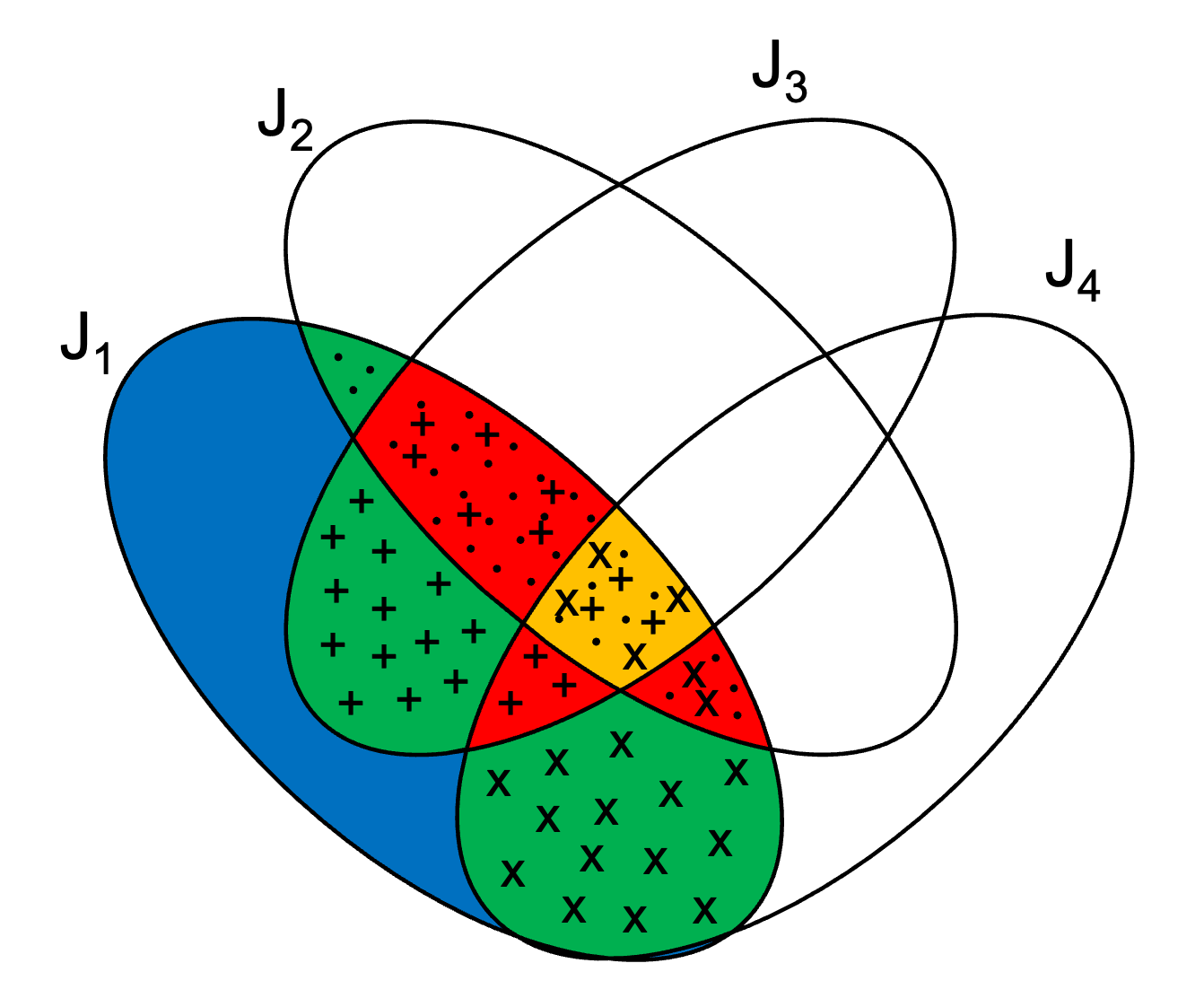}
    \caption{}
    \label{fig:koverlap}
    \end{subfigure}
    \label{fig:union_op}
    \caption{(a) union operation, (b) cover for three joins, and  (c)  $\mathcal{A}_j^k$ of four joins.}
    \vspace{-3mm}
\end{figure}
    
\eat{Note that, in order to define join selection probabilities, both  algorithms need to estimate join size, $|J_i|$, and union size, $|U|$. Our framework computes these statistics during a warm-up phase. 
For join size estimation, we draw on the rich body of work in the literature. While any join size estimation technique can be plugged into our framework, \S~\ref{sec:joinsize} revisits some of the existing techniques. Set union size estimation is a core problem to our paper (\S~\ref{sec: setunionsize}). To the best of our knowledge, we are the first to consider the problem of estimating the union of joins. Finally, upon selecting a join, the algorithm needs to obtain a random sample. Our framework incorporates the state-of-the-art algorithm on random sampling over join~\cite{2018_sample_join_revisit}, which we overview in \S~\ref{sec:samplingjoin}.} 
\begin{theorem}\label{th:setunionsampling}
Given joins $S=\{J_1,\ldots, J_n\}$, Algorithm~\ref{algNB} returns each result tuple $t$ with value $u$ with probability 
$\frac{1}{|J_1\cup\ldots\cup J_n|}$. 
\end{theorem}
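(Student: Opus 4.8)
The plan is to reduce the correctness of Algorithm~\ref{algNB} to a single structural fact about the cover $C = \{J_1', \ldots, J_n'\}$: that it forms a partition of the universe $U$. First I would verify this directly from the definition $J_i' = \{t \in J_i : t \notin \bigcup_{j<i} J_j'\}$. Disjointness is immediate, since any tuple placed in $J_i'$ is explicitly excluded from every $J_k'$ with $k > i$; and every $u \in U$ lies in some $J_i$, so it lies in $J_{i^*}'$ for the smallest index $i^*$ with $u \in J_{i^*}$. Consequently $U = \biguplus_i J_i'$ and $\sum_i |J_i'| = |U|$, which in particular shows that the selection probabilities $P(J_i) = |J_i'|/|U|$ form a valid distribution (they sum to one), so exactly one join is chosen per round. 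This also recovers consistency with the inclusion--exclusion expression given for $|J_i'|$.

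Next, I would fix a target value $u \in U$ and let $J_{i(u)}'$ be the unique cover block containing $u$. The core claim to establish is that, conditioned on a round selecting join $J_i$, the tuple the round ultimately contributes is distributed uniformly over the block $J_i'$ (each value with probability $1/|J_i'|$), rather than over all of $J_i$. This is precisely what the resampling-and-revision mechanism provides: a tuple sampled from $J_i \setminus J_i'$ lies, by the definition of the cover, in an overlap region owned by some earlier join $J_j$ with $u \in J_j'$, and the revision step reassigns any such value to that earlier owner while forcing $J_i$ to be resampled. Hence a value $u$ is credited to $J_i$ only when $u \in J_i'$, i.e.\ only when $i = i(u)$.

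Combining the two steps gives the result by a short computation: $u$ is returned exactly when the round selects $J_{i(u)}$ and the restricted sample lands on $u$, so
\begin{equation*}
P(t = u) = P(J_{i(u)}) \cdot \frac{1}{|J_{i(u)}'|} = \frac{|J_{i(u)}'|}{|U|} \cdot \frac{1}{|J_{i(u)}'|} = \frac{1}{|U|}.
\end{equation*}
Because $J_{i(u)}'$ is the only block containing $u$, no other join can contribute the value $u$, so there is no double counting; summing over $u \in U$ confirms the total mass is one.

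I expect the main obstacle to be the rigorous justification of the second step: that revision faithfully realizes uniform sampling from $J_i'$ without any a priori knowledge of the overlaps, and that its bookkeeping across rounds neither biases the per-value marginal nor destroys independence between successive accepted tuples. The delicate point is that revision couples rounds, since a value tentatively attributed to a later-ordered join may be moved and trigger a resample. I would argue that this coupling is a pure relabeling of ownership that leaves the multiset of accepted values, and their uniform marginal, invariant; formally, one shows the induced distribution on accepted values coincides with the idealized process that selects $J_i$ with probability $|J_i'|/|U|$ and then resamples $J_i$ until the drawn tuple lies in $J_i'$, whose analysis is exactly the elementary computation above.
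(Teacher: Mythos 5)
Your proposal is correct and takes essentially the same route as the paper's own proof: the paper encodes your ``unique owner'' observation as a mapping $f: U \rightarrow S$ with $g(J_i) = J_i'$ and performs the identical two-stage computation $P(t=u) = \frac{|g(f(u))|}{|U|} \cdot \frac{1}{|g(f(u))|} = \frac{1}{|U|}$, i.e.\ select the cover block containing $u$, then sample uniformly within it. If anything you are more thorough, since the paper neither verifies explicitly that the cover partitions $U$ (hence that the selection probabilities sum to one) nor addresses the point you rightly flag as delicate --- that the revision mechanism must be shown to faithfully simulate restricted uniform sampling from $J_i'$ without a priori overlap knowledge --- a step the paper's proof leaves implicit by arguing directly about the idealized process.
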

\begin{proof} Intuitively, a cover defined by Algorithm~\ref{algNB} decides from which join exclusively a value in the overlap of a collection of joins is sampled.  Recall $U$ is the universe of the set union of tuples of joins, i.e., $\{u | u \in \cup_i J_i\}$. Algorithm~\ref{algNB} uses a mapping strategy function $f: U \rightarrow S$  that tells us to which $J_i$ a specific $u$ is assigned.  
Note that $u$ could belong to multiple $J_i$'s, however, $f$ refers to the unique $J_i$ from which $u$ can be sampled.  
Let a cover $C$ of $S$  be the quotient space of $U$ over $f$ and $g:S\rightarrow C$ be a mapping function such that $g(J_i)=J_i^\prime$.  
Then, $g \circ f $ will map each $u\in U$ to a join in cover $C$. 
For all $u$, we denote $|g(f(u))|$ to be $|\{u^\prime | g(f(u^\prime))=g(f(u))\}|$. In other words, the probability of sampling a $u\in U$ depends on the probability of selecting $g(f(u))$ followed by sampling $u$ from $g(f(u))$. 
 Therefore, we obtain the probability of $P(t = u)$ as follows. 
\begin{align*}
    P(t = u) = P(f(u))\cdot \frac{1}{|g(f(u))|} = \frac{|g(f(u))|}{|U|} \cdot \frac{1}{|g(f(u))|} = \frac{1}{|U|}
\end{align*}
\end{proof}

Computing the probability distribution of line 2 of Algorithm~\ref{algNB} requires the knowledge of $|J^\prime_i|$ as well as $|U|$. In \S~\ref{sec:setunionsize}, we describe ways of estimating the overlap of $k$ joins and $|J^\prime_i|$. 

\begin{algorithm}
\caption{\fn{Union Sampling}}
\begin{small}
\begin{algorithmic}[1]
\Require{Joins $S = \{J_j, 1\ \leq j \leq n\}$, tuple count $N$}
\Ensure{Tuples $\{t_i, 1 \leq i \leq N\}$}
\State $\{|J_j|, 1\ \leq j \leq n\}, |U| \leftarrow warmup(S)$
\Comment{\fn{\direct (\S~\ref{sec:joinoverlap}) or \randomwalk 
(\S~\ref{sec:online})}}

\State $\{|J_j'|\} \leftarrow\fn{cover(S)}$ 
\State $T \leftarrow \{\}$
\Comment{target sample}
\State $orig\_join_i \leftarrow \{\}$ \Comment{record of original join of seen tuples}
\While{$n < N$}
\State{select $J_j$ with probability $\frac{|J_j'|}{|U|}$}
 \State{$t \leftarrow$ a random sample from $J_j$}

\If{$t \in orig\_join_i $ for any $i < j$} reject $t$
\Else
\If{$t \in orig\_join_i $ for any $i > j$} 
\Comment{\fn{revision}}
\State remove $t$ from $orig\_join_i$ \fn{ and add $t$ to $orig\_join_j$}
\State remove all $t$'s from $T$
\EndIf
\If{$t \notin orig\_join_i$} add $t$ to $orig\_join_j$
\EndIf
\State $T \leftarrow T \cup \{t\}$
\EndIf
\EndWhile
\State \textbf{return} $T$
\end{algorithmic}
\label{algNB}
\end{small}
\end{algorithm}

\subsection{Join Sampling Revisited} 
\label{sec:samplingjoin}

To sample a single join (line 7 of Algorithm~\ref{algNB}), we consider the work by Zhao et al.~\cite{2018_sample_join_revisit}, which is a generic framework for sampling from any type of join. 
The framework defines a join data graph where each tuple in a relation is a node. Each tuple $t$ is labeled with a  weight defined as the upper bound for the number of  tuples in the join result that $t$ yields. 
The framework performs accept/reject sampling. Each tuple from a relation is sampled with some probability based on its weight and is rejected with some rate in terms of the weights to guarantee uniformity. 
We make some design choices to adopt the join sample framework of Zhao et al. as a subroutine in our union sampling framework. 

First, for weight instantiation, we use three  techniques: extended Olken's, exact, and Wander Join~\cite{2016_wander_join}, proposed by Zhao et al.~\cite{2018_sample_join_revisit}.  
Second, this framework requires index structures over base relations to know which tuples can be joined together. Instead,  we use hash tables for relations to maintain tuples' joinability  information. 
Third, one limitation of Zhao et al.'s framework is the assumption of having only key-foreign key joins between relations. Since in a generic join, some tuples may not have a joinable tuple in other relations, we  release this assumption by modifying the Extended Olken's to 
set the weights (and hence probabilities) of those tuples to zero with an extra linear search in the hash tables. 

Finally, to obtain the accept/reject ratio, this framework   
allows us to plug in any of the join size upper-bound  estimations. We also need to compute the size upper bound of joins in Algorithm~\ref{algNB}.  To do so, in what follows, we adopt parts of the algorithm proposed in \citet{ngo2018worst} and  extend Olken's algorithm~\cite{Olken} to calculate the upper bound on the size of joins of an arbitrary number of relations.  

Assume a join $J = R_1 \bowtie_{A_1} R_{2} \bowtie_{A_2} \cdots \bowtie_{A_{n-1}} R_{n}$. Let $M_{A_i}(R_{i+1})$ be the maximum value frequency in  attribute $A_i$ of relation  $R_{i+1}$. Since each tuple in $R_2$ with value $v$ for $A_i$  can be matched with maximum $M_{A_i}(R_{i+1})$ tuples of $R_{i+1}$ on $A_i$, we have the following upper bound for the size of a join $J$: 
$|J| \leq |R_{1}| \cdot \prod_{i = 1}^{n-1} M_{A_i}(R_{i+1})$. In our framework, we consider the above extension of Olken's algorithm for join size estimation in all  algorithms. 

\fn{\subsection{Cost Analysis}} 
\label{sec:complexity}

Since the subroutine of sampling from a join in Algorithm~\ref{algNB} is based on the existing algorithms, for the cost analysis, we decouple the delay of random sampling over join from our algorithm and consider the total number of samples obtained from the join subroutine as our total cost. 

\begin{theorem}\label{th:time}
Given joins $S=\{J_1,\ldots, J_n\}$, the expected total sampling cost of Algorithm~\ref{algNB} for returning $N$ uniform and independent samples is $N + N \log N$.  
\end{theorem}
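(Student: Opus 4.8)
The plan is to identify the total sampling cost with the expected number of iterations of the \textbf{while} loop of Algorithm~\ref{algNB}, since each iteration calls the single-join sampler (line~7) exactly once and, under the convention of \S~\ref{sec:complexity}, this is the only operation that is charged; so it suffices to count how many draws are needed before the counter reaches $n = N$. My first step is to establish a structural invariant: the \emph{first} time any value $u \in U$ is produced by the sampler---from whichever join $J_j$ it happens to be drawn---it passes the test on line~8 (it lies in no $orig\_join_i$), is inserted into $T$, and is never afterwards removed, because every later revision touching $u$ deletes and immediately re-inserts it (lines~12 and~15) while only changing the join to which it is attributed. Consequently $|T|$ equals the number of \emph{distinct} values that have so far appeared in the stream of sampled tuples, and the loop terminates exactly when this stream first contains $N$ distinct values.

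This reduces the theorem to a coupon-collector estimate. Successive draws are independent because the single-join sampler is memoryless, and in the overlap-free idealization each draw emits a value uniformly over $U$ (there $J_j' = J_j$, so the per-draw value law is exactly $1/|U|$). Treating the $|U|$ values as coupons, I would condition on having already collected $k-1$ distinct values: the chance the next draw is new is $(|U|-k+1)/|U|$, so the expected number of additional draws to reach $k$ distinct values is $|U|/(|U|-k+1)$. Summing over $k = 1, \dots, N$ gives an expected cost of
\[
    \sum_{k=1}^{N} \frac{|U|}{|U|-k+1} = |U|\bigl(H_{|U|} - H_{|U|-N}\bigr),
\]
which is decreasing in $|U|$ and hence maximized in the most overlapping regime $|U| = N$, where it equals $N H_N$. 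Using the standard bound $H_N \le 1 + \log N$ then yields $N H_N \le N + N\log N$, the claimed cost.

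The main obstacle is the coupling between this idealization and the actual online bookkeeping, and it has two parts. First, the marginal law of a single draw is \emph{not} uniform over $U$ once the joins overlap, so the clean harmonic schedule must be justified either by a stochastic-domination argument---bounding the genuine collector by the uniform one arising in the overlap-free case---or by absorbing the discrepancy into the worst-case $|U| = N$ bound. Second, I must confirm that the revision operation (lines~10--12) never forces extra iterations: although a revision re-labels a previously collected value and momentarily deletes its copy, the invariant above guarantees the value is re-inserted in the same iteration, so revisions change neither $|T|$ nor the count of distinct values, and therefore cannot lengthen the loop. Verifying these facts rigorously---especially the domination step that controls the effect of overlap on the per-draw value distribution---is where the real work lies; once they are in place, the harmonic summation delivers the bound directly.
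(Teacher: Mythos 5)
Your reduction rests on a misreading of what Algorithm~\ref{algNB} counts. The framework samples \emph{with replacement} (\S~\ref{sec:setunion}: tuples are drawn ``with replacement, with $1/|U|$ probability, until the desired sample size $N$ is reached''), and the loop counter $n$ tracks accepted tuples \emph{with multiplicity}: if a value $u$ is drawn again from the join that the cover assigns it to, it passes the checks again and another copy is added to $T$ (that $T$ is a multiset is visible in line~12, which removes ``all $t$'s'' from $T$). So the loop does not terminate ``when the stream first contains $N$ distinct values''; it terminates after $N$ accepted draws counted with repetition. Your global coupon-collector process---expected cost $|U|\bigl(H_{|U|}-H_{|U|-N}\bigr)$, maximized at $|U|=N$---therefore analyzes a different algorithm, namely distinct-value enumeration (sampling without replacement), which the paper explicitly distinguishes from its own problem in \S~\ref{sec:intro} and \S~\ref{sec:complexity} (the Carmeli et al.\ setting). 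For the same reason your claim that revisions ``cannot lengthen the loop'' fails: line~12 deletes \emph{all} copies of $t$ from $T$ while only one copy is re-inserted, so if a value was accepted several times under a wrong attribution, a revision strictly decreases $n$ and forces extra iterations. Your invariant is correct only for the distinct-value count, which is not the quantity the while-condition tests.

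For contrast, the paper's proof decomposes per join rather than globally: with $\alpha_j = |J^\prime_j|/|U|$, the final sample contains $N_j = \alpha_j N$ tuples from $J_j$ in expectation; the number of draws $\psi_j$ ever taken from $J_j$ is bounded by a coupon-collector quantity $N_j\log N_j$; and summing over $j$, using $\sum_j \alpha_j = 1$ so that the term $\sum_j \alpha_j\log\alpha_j$ contributes nothing positive, gives $\psi \leq N + N\log N$. Even if one granted your framing, the step you yourself flag as ``where the real work lies''---stochastic domination of the true per-draw value law, which is the mixture $\sum_j \alpha_j\cdot\mathrm{Unif}(J_j)$ rather than $\mathrm{Unif}(U)$, by the uniform collector---is left unproven, and without it the harmonic schedule has no justification. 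As written, the proposal has a genuine gap at its foundation (the termination condition) plus an unproven domination lemma, and does not establish the theorem.
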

\begin{proof} Given a cover \yl{$C=\{J_j^\prime \;\vert\; j \in [1,n] \cap \mathbf{Z}\}$}, Algorithm~\ref{algNB} samples each join $J_j$ with probability $|J^\prime_j|/|U|$. Let $N_j$ be the number of tuples from $J_j$ that are in the final result. Based on Algorithm~\ref{algNB}, we know a tuple from $J_j$ is in the final sample if it is obtained from $J^\prime_j$. Therefore, we have $N_j = \frac{|J^\prime_j|}{|U|}~.~N$, in expectation. 
Let $\psi_j$ be the number of tuples Algorithm~\ref{algNB} ever obtains from $J_j$. A tuple may be a rejected, accepted, or revised sample, because the set of tuples from different joins may intersect. Based on the union bound, the number of iterations of Algorithm~\ref{algNB} is bounded by the sum of the number of tuples sampled from each join. Using this principle, we have the  expected total number of iterations of $\psi \leq \sum_{j=1}^n\psi_j$. Given $N_j$ coupons, the coupon collector's problem provides a bound for the number of samples we expect we need to draw with replacement before having drawn each coupon at least once~\cite{MotwaniR95}. 
This result allows us to obtain the expected value of $\psi_j=N_j\log N_j$. Therefore, we have the following expected number of iterations. 
$$\psi \leq \sum_{j=1}^{n}N_j\log N_j = \sum_{j=1}^{n}N.\frac{|J^\prime_j|}{|U|}\log \left( N.\frac{|J^\prime_j|}{|U|}\right)$$
Let $\alpha_j = \frac{|J^\prime_j|}{|U|}$. We have the following. 
\begin{align*}
    \psi \leq \sum_{j=1}^{n}\alpha_j.N \log (\alpha_j.N) =  &N\left(\sum_{j=1}^n\alpha_j\log\alpha_j + \sum_{j=1}^n\alpha_j\log N\right) 
\end{align*}
From the definition of cover, we  know $\sum_{j=1}^n\frac{|J^\prime_j|}{|U|}=1$. Therefore, we have the following bound on the expected total time. 
$$\psi \leq N(\log(H(n)) + \log N) \leq N + N \log N$$
\end{proof}

We remark that  although our algorithm does not have a
strict and deterministic guarantee on the delay between samples, our total time is on par with the $\mathcal{O}(N \log N)$ time of the algorithm proposed by Carmeli et al., for the random enumeration of the result of the union of conjunctive queries, where $N$ is the number of answers~\cite{CarmeliZBKS20}. 

\section{Size of Set Union of Joins}
\label{sec:setunionsize}

Executing full joins and computing set union  is costly. We propose a novel way  of computing the set union size  by using the size of joins and the size of the overlap of joins.  To do so, we first separate each join  $J_j$ into $n$ disjoint parts, denoted as $J_j = \bigcup_{k = 1}^n \mathcal{A}_j^k$, where $\mathcal{A}_{j}^k$ is the set  of tuples of $k$-th overlap in $J_j$, i.e., each tuple in $\mathcal{A}_j^k$ belongs to $J_j$ and appears in exactly $k-1$ other joins. The base case $\mathcal{A}_{j}^1$ includes the tuples in $J_j$ that are  the set complement of all overlaps.  Fig.~\ref{fig:koverlap} represents the $\mathcal{A}_{j}^k$ areas for a join $J_1$. 
\eat{\begin{example}
Fig.~\ref{fig:koverlap} represents the $\mathcal{A}_{j}^k$ areas for join $J_1$, where the red area is $\mathcal{A}_{j}^1$, the blue area is $\mathcal{A}_{j}^2$, and the green area is $\mathcal{A}_{j}^3$. 
\end{example}}
Since for each $J_j$,  $\mathcal{A}_{j}^k$'s are disjoint, we can define the size of the set union $U$ as follows. 
\begin{equation}
\label{eq:Ajk}
|U| = \sum_{j=1}^n\sum_{k=1}^n\frac{1}{k}|\mathcal{A}_j^k|
\end{equation}

\eat{
\begin{example} \fn{Consider join paths of Fig.~\ref{fig:ajk}. The disjoint union of these joins is the sum of $|\mathcal{A}_j^k|$'s of all joins. To compute the set union we need to exclude from this sum the regions that have been counted more than once. For example, $\mathcal{A}_1^2$ is counted twice (in  $\mathcal{A}_2^2\cap J_1$ as well as $\mathcal{A}_3^2\cap J_1$) and $\mathcal{A}_1^3$ is counted thrice (in all join paths). 
To avoid over-counting, the $\mathcal{A}_j^k$'s are weighed by $1/k$, in Eq.~\ref{eq:Ajk}.}
\end{example}
}

Note that $\mathcal{A}_j^k$ is non-trivial information, which requires combining the overlap size of $k$-combinations of joins. There are two challenges for computing $\mathcal{A}_j^k$. First,  there is no relationship between the pairwise overlap information and higher order $k$-th overlap, $\mathcal{A}_j^k (k > 2)$. Second, computing a pairwise overlap size without a full join is more challenging than computing a single join size. 

Suppose we have a way of computing the overlap for any set of joins. More formally, given a collection $\Delta\in S$ of joins, $\mathcal{O}_{\Delta}$ denotes the overlap of joins in $\Delta$.  In \S~\ref{sec:joinoverlap} and~\ref{sec:others}, we describe various algorithms for overlap estimation of all join types (chain, cyclic, and acyclic). 
Now, we turn our attention to computing  $\mathcal{A}_j^k$ using $\mathcal{O}_{\Delta}$. We describe the intuition of our solution with an example. 

\begin{example} \label{ex:koverlap} 
Consider the joins $S=\{J_1, \cdots, J_4\}$ of Fig.~\ref{fig:koverlap}. The areas $\mathcal{A}_1^k$ for $k\in[1,4]$ are color-coded. 
We would like to compute the size of $\mathcal{A}_1^2$. The dotted, $+$, and $\times$ areas included all pairwise overlaps. Suppose we first compute the sum of  the pairwise overlap size of joins with $J_1$, i.e., $\sum_{\Delta\in \mathbb{P}_2\wedge J_1\in\Delta}|\mathcal{O}_{\Delta}|$, where $\mathbb{P}_2$ is the collection of all subsets of size $2$ of $S$.  
However, to determine the area of the overlap of exactly one join  with $J_1$, $\mathcal{A}_1^2$, we need to exclude all $\mathcal{A}_1^3$ and $\mathcal{A}_1^4$ areas. In fact, each subarea of $\mathcal{A}_1^3$ counts twice in the above sum. 
For example, $J_1\cap J_2\cap J_3$ is in both $J_1\cap J_2$ and $J_1\cap J_3$. Similarly, $\mathcal{A}_1^4$ counts three times in the sum of $\mathcal{O}_{\Delta}$'s since it is included in $J_1\cap J_2\cap J_3$, $J_1\cap J_2\cap J_4$, and $J_1\cap J_3\cap J_4$. To avoid over-counting, the $\mathcal{A}_j^k$'s are weighed by $1/k$, in Eq.~\ref{eq:Ajk}. 
\end{example}

\begin{theorem}
\label{Ajk}
Let $S = \{J_1, J_2, \dots J_n\}$ and $\mathbb{P}_k$ be all subsets of size $k$ of $S$,  
then for any join path $J_j$, and for any $1\leq k\leq n$, we have 
$$|\mathcal{A}_j^k| = \sum_{\Delta\in \mathbb{P}_k\wedge J_j\in\Delta} |\mathcal{O}_{\Delta}| - ( \sum_{r=k+1}^{n} \binom{r-1}{k-1} \cdot |\mathcal{A}_j^{r}|).$$
For $k = n$, we have $|\mathcal{A}_j^n| = |\mathcal{O}_{S}|$. For $k = 1$, we have the following.  $$|\mathcal{A}_j^1| = \sum_{\Delta\in \mathbb{P}_1\wedge J_j \in\Delta} |\mathcal{O}_{\Delta}| - \sum_{r = 2}^{n} \binom{r-1}{0} |\mathcal{A}_j^k| = |J_j| - \sum_{r = 2}^n |\mathcal{A}_j^r|$$
\end{theorem}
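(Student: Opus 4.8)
The plan is to prove the identity by a direct double-counting argument that first expresses the sum $\sum_{\Delta \in \mathbb{P}_k \wedge J_j \in \Delta} |\mathcal{O}_{\Delta}|$ in terms of the region sizes $|\mathcal{A}_j^r|$, and then isolates the $r = k$ term. The starting observation is the structural fact that the sets $\{\mathcal{A}_j^r\}_{r=1}^n$ partition $J_j$: every tuple $t \in J_j$ lies in exactly one $\mathcal{A}_j^r$, namely the one whose index $r$ equals the total number of joins in $S$ containing $t$ (counting $J_j$ itself). This is what reduces the whole claim to a per-tuple multiplicity count.

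First I would fix a tuple $t \in \mathcal{A}_j^r$ and count how many terms of $\sum_{\Delta \in \mathbb{P}_k \wedge J_j \in \Delta} |\mathcal{O}_{\Delta}|$ receive a contribution from $t$. A term $|\mathcal{O}_{\Delta}|$ counts $t$ precisely when $t \in \mathcal{O}_{\Delta} = \bigcap_{J_i \in \Delta} J_i$, i.e. when every join in $\Delta$ contains $t$. Since $\Delta$ must contain $J_j$ and have size $k$, and $t$ lies in exactly $r$ joins (one being $J_j$), such a $\Delta$ is formed by adjoining to $J_j$ any $k-1$ of the remaining $r-1$ joins that contain $t$. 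Hence $t$ is counted in exactly $\binom{r-1}{k-1}$ terms. Summing over all tuples and grouping them by the region $\mathcal{A}_j^r$ to which they belong yields
\begin{equation*}
\sum_{\Delta \in \mathbb{P}_k \wedge J_j \in \Delta} |\mathcal{O}_{\Delta}| = \sum_{r=k}^{n} \binom{r-1}{k-1}\, |\mathcal{A}_j^r|,
\end{equation*}
where the lower limit is $r = k$ because $\binom{r-1}{k-1} = 0$ whenever $r < k$ (there are too few joins containing $t$ to fill a $k$-subset).

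Next I would isolate the leading term: the $r = k$ summand carries coefficient $\binom{k-1}{k-1} = 1$, so splitting it off gives $\sum_{\Delta} |\mathcal{O}_{\Delta}| = |\mathcal{A}_j^k| + \sum_{r=k+1}^n \binom{r-1}{k-1}|\mathcal{A}_j^r|$, and rearranging produces exactly the claimed recurrence. Finally I would verify the boundary cases against this formula: for $k = n$ the only admissible $\Delta$ is $S$ itself, so the sum collapses to $|\mathcal{O}_S|$ while the correction sum is empty, giving $|\mathcal{A}_j^n| = |\mathcal{O}_S|$; for $k = 1$ the only admissible $\Delta$ is $\{J_j\}$, so the first sum is $|J_j|$, and since $\binom{r-1}{0} = 1$ the correction reduces to $\sum_{r=2}^n |\mathcal{A}_j^r|$, recovering $|\mathcal{A}_j^1| = |J_j| - \sum_{r=2}^n |\mathcal{A}_j^r|$.

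The argument is essentially self-contained once the partition property is in hand, so I do not anticipate a serious obstacle. The one point demanding care is the multiplicity computation: I must argue that the $k-1$ additional joins are drawn only from the $r-1$ other joins that actually contain $t$ (not from all $n-1$ remaining joins), which is what produces $\binom{r-1}{k-1}$ rather than a coefficient independent of $r$, and I must confirm that the vanishing of $\binom{r-1}{k-1}$ for $r < k$ correctly truncates the lower end of the summation so that no spurious terms survive.
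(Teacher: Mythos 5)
Your proof is correct and is essentially the paper's own argument: both hinge on the partition of $J_j$ into the disjoint regions $\mathcal{A}_j^r$ and the same multiplicity count, namely that anything in $\mathcal{A}_j^r$ is counted in exactly $\binom{r-1}{k-1}$ of the size-$k$ subsets $\Delta$ containing $J_j$, with the boundary cases $k=n$ and $k=1$ handled identically. The only difference is presentational — you derive the closed-form identity $\sum_{\Delta}|\mathcal{O}_{\Delta}| = \sum_{r=k}^{n}\binom{r-1}{k-1}|\mathcal{A}_j^r|$ by a per-tuple double count and rearrange, whereas the paper phrases the same deduction as a downward recursion from $|\mathcal{A}_j^n|$ — and if anything your version is the tidier write-up of the shared idea.
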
 

\begin{proof}
When $k = n$, $\mathbb{P}_n$ is the set  representing the universe $S$ including $J_j$. Therefore, it is trivial that  
$|\mathcal{A}_j^n| = O_S$, which can be evaluated 
with $\bigcap_{J_j \in S} J_j$. Then, for $k \in [2, n-1] \cap \mathbf{Z}$, we calculate $|\mathcal{A}_j^k|$ dynamically. 
Now, suppose we know $|\mathcal{A}_j^{k+1}|$. 
Recall $\mathcal{A}_j^k$ consists of all tuples in $J_j$ that appear in exactly $k-1$ other join paths. That is, tuples in $J_j$ that are in some $\Delta\in \mathbb{P}_k$ but are not in any higher order overlap  $\Delta^\prime\in \mathbb{P}_{r}$, where $r\in[k+1,n]$. 
Therefore, we first add up all the $k$-th overlap for sets $\Delta\in \mathbb{P}_k$, where $J_j \in \Delta$.
Since $J_j$ is confirmed, we have $\binom{n-1}{k-1}$ number of such sets $\Delta$. 
Note that a tuple $t\in\mathcal{A}_j^k$ may appear in multiple $\Delta\in \mathbb{P}_r, r\in[k+1,n]$. 
Therefore, to get the exact value of $|\mathcal{A}_j^k|$, 
for each $r\in[k+1,n]$, 
we need to count the number of $\Delta\in \mathbb{P}_r$ where $J_j\in\Delta$. 
Starting with $r = k+1$, each such combination of $\Delta\in \mathbb{P}_{k+1}$ contains $J_j$, therefore, it  appears once in remaining $\binom{k}{k-1}$ number of $\Delta^\prime\in \mathbb{P}_k$'s. 
Hence, we need to deduct $(k-1)\cdot |\mathcal{A}_j^{k+1}|$ from the sum. 
Now for the general case $r$, where $k < r\leq n$, after $J_j$ is confirmed, each combination of $\Delta\in \mathbb{P}_r$ has its other $k-1$ paths chosen in $\binom{r-1}{k-1}$ number of $\Delta^\prime\in \mathbb{P}_{k}$, so a total number of $\binom{r-1}{k-1}$ $|\mathcal{A}_j^{r}|$ needs to be deducted from the sum for each $r$. 
Therefore, we can organize the formula of calculating $|A_j^k|$ as shown in the theorem. 
\end{proof}

\eat{
\begin{proof}
When $k = n$, $\mathbb{P}_n$ is the set  representing the universe $S$ including $J_j$. Therefore, it is trivial that  
$|\mathcal{A}_j^n| = O_S$, which can be evaluated 
with $\bigcap_{J_j \in S} J_j$. Then, for $k \in [2, n-1] \cap \mathbf{Z}$, we calculate $|\mathcal{A}_j^k|$ dynamically. 
Now, suppose we know $|\mathcal{A}_j^{k+1}|$. 
Recall $\mathcal{A}_j^k$ consists of all tuples in $J_j$ that appear in exactly $k-1$ other join paths. That is, tuples in $J_j$ that are in some $\Delta\in \mathbb{P}_k$ but are not in any higher order overlap  $\Delta^\prime\in \mathbb{P}_{r}$, where $r\in[k+1,n]$. 
Therefore, we first add up all the $k$-th overlap for sets $\Delta\in \mathbb{P}_k$, where $J_j \in \Delta$.
Since $J_j$ is confirmed, we have $\binom{n-1}{k-1}$ number of such sets $\Delta$. 
Note that a tuple $t\in\mathcal{A}_j^k$ may appear in multiple $\Delta\in \mathbb{P}_r, r\in[k+1,n]$. 
Therefore, to get the exact value of $|\mathcal{A}_j^k|$, 
for each $r\in[k+1,n]$, 
we need to count the number of $\Delta\in \mathbb{P}_r$ where $J_j\in\Delta$. 
Starting with $r = k+1$, each such combination of $\Delta\in \mathbb{P}_{k+1}$ contains $J_j$, therefore, it  appears once in remaining $\binom{k}{k-1}$ number of $\Delta^\prime\in \mathbb{P}_k$'s. 
Hence, we need to deduct $(k-1)\cdot |\mathcal{A}_j^{k+1}|$ from the sum. 
Now for the general case $r$, where $k < r\leq n$, after $J_j$ is confirmed, each combination of $\Delta\in \mathbb{P}_r$ has its other $k-1$ paths chosen in $\binom{r-1}{k-1}$ number of $\Delta^\prime\in \mathbb{P}_{k}$, so a total number of $\binom{r-1}{k-1}$ $|\mathcal{A}_j^{r}|$ needs to be deducted from the sum for each $r$. 
Therefore, we can organize the formula of calculating $|A_j^k|$ as shown in the theorem. 
\end{proof}
}

Using this theorem to compute  $|\mathcal{A}_j^k|$'s for a given $J_j$ and all $k\in[1,n]$, we start by initializing $|\mathcal{A}_j^n|$ with $|\mathcal{O}_S|$ using the method proposed in \S~\ref{sec:joinoverlap}. Then, $|\mathcal{A}_j^{n-1}|$ requires evaluating $|\mathcal{A}_j^n|$ that have been already computed as well as $|\mathcal{O}_{\Delta}|$ for each subset of size $n-1$ of $S$. 
Again, \S~\ref{sec:joinoverlap} is used to compute a $|\mathcal{O}_{\Delta}|$. In general, iterating from $n-1$ to $1$, each $|\mathcal{A}_j^k|$ can be computed from $|\mathcal{A}_j^r|$'s, where $r\in(k,n]$, that have been already evaluated and $|\mathcal{O}_{\Delta}|$'s that can be computed from our method for the pairwise join path overlap. 

Computing the size of a set union  requires computing   the overlap of all $k$-subsets of joins, which is exponential in the number of input joins. 
We remark that in practice the number of input joins is small. 
However, when  $S$ is large, if we compute  $|\mathcal{O}_{\Delta}|$'s in the order of the bottom-up traversal of the powerset lattice of $S$, we can speed up by  reusing some of the computation. 

{\bf \fn{Warm-up Phase:}} Note that computing the exact values of $k$-overlaps and overlaps for an arbitrary number of joins and relations is computationally expensive or infeasible. 
Next, we present two instantiations of the framework for approximating these parameters.  We consider two cases: centralized and decentralized~\cite{HuangYPM19}.  
In a centralized setting, relations are accessible through  direct access to data, such as relations within  databases. We propose \randomwalk for this setting. In a decentralized setting, data is private or expensive to sample. Examples include data markets or large relations in databases. Our \direct method is suitable for this setting. 
\fn{Different instantiations of the framework only differ
in how the union size  bound, and join overlap  bounds are computed during the warm-up phase. 
We remark that both methods guarantee uniformity. 
There is a tradeoff between efficiency and cost of estimation: tighter
upper bounds are more costly to set up, but once in place, can
generate samples more efficiently. 
On the other hand, looser upper
bounds are easier to compute but lead to low sampling efficiency
(due to potentially higher rejection rates).  
We propose a modified version of union sampling based on the \randomwalk method that does not require warm-up and strikes a better tradeoff between upper-bound computation and sampling efficiency.} 

\eat{
\fn{In the next section, we describe two instantiation methods. 
The direct method leverages minimal statistics of base relations, such as maximum degree, and has almost zero setup cost but very low sampling efficiency. The second method requires some sampling cost during the warm-up phase but yields a better upper-bound computation.  
Next, we describe a modified version of union sampling based on the online method that strikes a better tradeoff between upper-bound computation and sampling efficiency. 
Finally, we discuss how to extend to other types of joins in \S~\ref{sec:others}.}
}

\fn{\section{Instantiation with Histograms}}
\label{sec:joinoverlap}

\fn{Database management systems often maintain histograms as a special type of column statistic that provides  more detailed information about the data distribution in a table column during query optimization. These histograms are useful for  cardinality estimation, particularly if the data in a column is skewed. 
In this section, we present ways of estimating join overlap and union size using these histograms and even  more minimalistic statistics such as maximum degrees of tuples in relations. Here, we propose a solution for the case of chain join, inspired by Olken's seminal work on join size estimation\cite{olken1995random}. In \S~\ref{sec:cyclic}, we extend our framework to more generic cyclic and acyclic joins.} 


\subsection{Overlap of Equi-length Chain Joins}
\label{sec:overlapmulti}

\fn{We start with estimating the overlap of multiple chain joins. Suppose all joins consist of the same number of relations and there is a one-to-one mapping between relations of each pair of joins such that mapped relations have the same schema.}  Given a collection of joins $S$ and a subset $\Delta\subseteq S$, let $O_{\Delta} = \bigcap_{J_j\in\Delta}J_j$ be the set of tuples that appear in all $J_j \in \Delta$. Trivially, a loose upper bound for the overlap is $\min\{|J_j|: J_j\in\Delta\}$. We first partition the joins on relations consistently. At each step, we estimate the \fn{overlap} size of each sub-join \fn{dynamically from the overlap of smaller sub-joins} by multiplying \fn{the overlap size of a smaller sub-joins} by the minimum of the maximum degree of values join attributes. 
\fn{For example, for joins of three relations, we first evaluate the overlap of the first relations in all joins. Then, we evaluate the overlap of the first two relations in all joins by multiplying the overlap of the first relations by the minimum of the maximum degree of values of the first relations, and so on.} 

More formally, let $\mathcal{K}(i)$ be the upper bound of the number of overlapping tuples after the $i$-th join. Hence, $|\mathcal{O}_{\Delta}| \leq \mathcal{K}(n-1)$. Let $M_{A_l}(R_{j,i})$ be the maximum degree of values in the domain of a join attribute $A_l$ of relation $R_{j,i}$ of join  $J_j$ and let $d_{A_l}(v,R_{j,i})$ be the degree of value $v$ in the domain of $A_l$. \fn{Note that the statistics of the degree of values are available from the histograms on join attributes.}  We can obtain an upper bound dynamically as $\mathcal{K}(i) = \mathcal{K}(i-1) \cdot \min_{J_j\in\Delta}\{ M_{A_i}(R_{j,i+1})\}$. Note that for $\mathcal{K}_1$ we calculate the bounds based on values, i.e., $ \mathcal{K}(1) = \sum_{v\in\mathcal{C}}\min_{J_j\in\Delta}\{d_{A_1}(v,R_{j,1}) \cdot  d_{A_1}(v,R_{j,2})\}$. 
\fn{So far, this bound requires the full histogram of the first relations in all joins and the maximum degree of values in the remaining relations. If the histograms are available for all join attributes in the relations, we can further refine the bound by replacing the term of the minimum of maximum degrees, $M_{A_i}(R_{j,i+1})$, with the minimum of the average degree of values in the join attributes.}

\subsection{Overlap of Chain Joins}
\label{sec:overlapgen}

We now release this assumption to accommodate joins with arbitrary length  and arbitrary relation schemas. Note that the joins themselves should still have the same schemas after joining. We introduce the {\em splitting method}  that \fn{aims to reorganize joins into joins on relations of the same size, so that the results of \S~\ref{sec:overlapmulti} can be applied.} The splitting method derives new joins by breaking down relations into sub-relations, each sub-relation consisting of exactly two attributes. The derived  joins  have the same schema and are lossless, i.e., each generates the same data as the original join, and all contain the same number of relations.  
Moreover, for each relation in a derived join, there are corresponding relations in other joins. Since the derived joins satisfy the requirements of \S~\ref{sec:overlapmulti} and generate the same data, we can directly apply \S~\ref{sec:overlapmulti}  to estimate the overlap size of the original joins.    
Although the input joins may not include relations with the same schemas, they definitely have  corresponding attributes and the same schema after joining. As such, breaking all relations in sub-relations of two attributes and redefining joins incurs join  with the same number of same-schema   relations. 

Note that our splitting method is different than the normalization in the database theory which aims to decompose relations into sub-relations based on functional dependencies to avoid anomalies~\cite{Codd71a}. {\em Split} relations keep a record of their original sizes for the estimation steps. We call the join between two relations split from the same original relation {\em fake join}. \fn{The following theorem describes a generic way of bounding the overlap of chain joins.} 
\begin{theorem}
\label{direct_olp}
Given a collection of split joins $S$ and a subset $\Delta\subset S$, let $O_{\Delta} = \bigcap_{J_j\in\Delta}J_j$.  Let $M_{A_l}(R_{j,i})$ be  the maximum degree of values in the domain of a  join attribute $A_l$ of relation $R_{j,i}$ of join $J_j$ and let  $d_{A_l}(v,R_{j,i})$ be the degree of value $v$ in the domain of $A_l$. We define the following. 
\begin{align*}
    M_{j,i} = \left\{\begin{array}{l}
    M_{A_i}(R_{j,i+1})\; \text{ if }\;  R_{j,i} \bowtie R_{j,i+1}\\
    \\
    \eat{|R_{j,i}|} 1\; \text{ if }\; R_{j,i} \bowtie' R_{j,i+1}
\end{array}\right.
\end{align*}
Let $\mathcal{K}(i)$ be the upper bound of the number of overlapping tuples after the $i$-th join and let $d_{A_l}(v,R_{j,i})$ be the degree of value $v$ in the domain of $A_l$. We then obtain an upper bound for the overlap size of joins in $\Delta$,  $|O_{\Delta}|$, dynamically as follows. 
\begin{align*}
    |\mathcal{O}_{\Delta}| &\leq \mathcal{K}(n-1) = \mathcal{K}(n-2) \cdot
    \min_{J_j\in\Delta}\{M_{j,n}\}\\
    \mathcal{K}(1) &= \sum_{v\in\mathcal{C}}\min_{J_j\in\Delta}\{d_{A_1}(v,R_{j,1}) \cdot  d_{A_1}(v,R_{j,2})\}\\
    \mathcal{K}(i) &= \mathcal{K}(i-1) \cdot \min_{J_j\in\Delta}\{ M_{j,i}\}
\end{align*}
\end{theorem}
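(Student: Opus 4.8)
The plan is to prove the bound by induction on the join position $i$, reading $\mathcal{K}(i)$ as an upper bound on the number of overlap tuples restricted to the prefix of the first $i+1$ split relations. After splitting, every join in $\Delta$ has the same number of split relations, which I denote $n$ to match the statement, and the same output schema. For each $J_j\in\Delta$ let $P_j^{(i)} = R_{j,1}\bowtie\cdots\bowtie R_{j,i+1}$ be the prefix chain join, and let $O_\Delta^{(i)} = \bigcap_{J_j\in\Delta} P_j^{(i)}$ be the overlap of these prefixes, so that $O_\Delta = O_\Delta^{(n-1)}$. It therefore suffices to show $|O_\Delta^{(i)}|\le\mathcal{K}(i)$ for every $i$ and then read off the claim at $i=n-1$.

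For the base case I would analyze the first join $R_{j,1}\bowtie_{A_1}R_{j,2}$. Any tuple of $O_\Delta^{(1)}$ carries a value $v$ on the shared attribute $A_1$, and within one join $J_j$ the number of prefix tuples with that value is exactly $d_{A_1}(v,R_{j,1})\cdot d_{A_1}(v,R_{j,2})$. Since an overlap tuple must be produced by \emph{every} join simultaneously, the number of overlap tuples carrying value $v$ cannot exceed the smallest of these products over $J_j\in\Delta$; summing over the common domain $\mathcal{C}$ of $A_1$ gives exactly $\mathcal{K}(1)$.

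The inductive step is the core of the argument. Assuming $|O_\Delta^{(i-1)}|\le\mathcal{K}(i-1)$, I map each tuple of $O_\Delta^{(i)}$ to its projection onto the first $i$ attributes. This projection lies in $O_\Delta^{(i-1)}$, because a tuple producible by the longer prefix join of each $J_j$ has a projection producible by the shorter one, and this holds for all $J_j\in\Delta$ simultaneously. It then remains to bound the fibre of this projection, i.e.\ the number of continuations of a fixed prefix overlap tuple across $R_{j,i+1}$. In a single join a real join $R_{j,i}\bowtie R_{j,i+1}$ admits at most $M_{A_i}(R_{j,i+1})$ continuations by the definition of maximum degree, whereas a fake join $R_{j,i}\bowtie' R_{j,i+1}$ admits exactly one, since it merely reassembles a relation that was split losslessly in \S~\ref{sec:overlapgen} and so a prefix has a unique continuation across the fake edge; these are precisely the two cases of $M_{j,i}$. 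As in the base case, a continuation must be valid in all joins at once, so each fibre has size at most $\min_{J_j\in\Delta}M_{j,i}$, and summing over fibres yields $|O_\Delta^{(i)}|\le|O_\Delta^{(i-1)}|\cdot\min_{J_j\in\Delta}M_{j,i}\le\mathcal{K}(i)$.

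I expect the main obstacle to be the rigorous justification of the two ``minimum over $J_j\in\Delta$'' steps: establishing that the count of overlap tuples, and of per-tuple continuations, is governed by the smallest join rather than any single one, which requires arguing that every overlap tuple and every extension must be realizable in all joins concurrently. Closely related is making the fake-join factor of one precise from the losslessness of the splitting construction, and confirming that the projection maps $O_\Delta^{(i)}$ into $O_\Delta^{(i-1)}$ without distorting the fibre counts. Once these structural claims are pinned down, the telescoping of the recursion $\mathcal{K}(i)=\mathcal{K}(i-1)\cdot\min_{J_j\in\Delta}M_{j,i}$ down to $\mathcal{K}(n-1)$ is routine.
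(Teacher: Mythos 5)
Your induction on prefix length, with the per-value minimum for the base case and the projection/fibre bound of $\min_{J_j\in\Delta} M_{j,i}$ for the inductive step (fake joins contributing a factor of one via the lossless reassembly of split relations), is precisely the dynamic argument the paper relies on: its own proof is a one-line reference to the constructions of \S~\ref{sec:overlapmulti} and \S~\ref{sec:overlapgen}, which your write-up formalizes faithfully and correctly. No substantive difference in approach; if anything, your version is more rigorous than the paper's.
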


\begin{proof}
The proof of this theorem follows from \S~\ref{sec:overlapmulti} and \S~\ref{sec:overlapgen}.
\end{proof}

\fn{We remark that Theorem~\ref{direct_olp} can become a biased estimator of join overlap if the data is skewed. Here, we present a solution with the least statistics available. We can extend the theorem, to become an unbiased estimator, in a straightforward way to use the histogram information of all join attributes and compute the expected value and upper bound of overlap.}

\section{Instantiation with Random Walks} 
\label{sec:online}

The techniques proposed in \S~\ref{sec:setunionsize}  perform join union size estimation in a direct manner. In this section, we consider an alternative and more accurate way of  estimating join overlap size in an online manner. The idea is to update the join size and overlap size on the fly, during the warm-up phase, by obtaining tuples from join paths and reusing these tuples during the main sampling step.

\subsection{Join Size Estimation Revisited}
\label{sec:onlinejoinsize}

To solve the online aggregation problem over join, wander join proposes an algorithm by performing random walks over the underlying join data graph~\cite{2016_wander_join}. This solution can be applied to join size estimation by computing the \texttt{COUNT} operation over the join.  
A join data graph models the join relationships among the tuples as a
graph, where nodes are tuples and there is an edge between two tuples if they can join. Using a join graph, we can easily obtain successfully joined tuples by performing random walks. The probability of a tuple sampled from a join can be  computed on the fly using the join graph. 
Given a join $J = R_1\bowtie R_2\bowtie\ldots\bowtie R_m$, the probability of a result tuple  $t = t_1\bowtie t_2\bowtie\ldots \bowtie t_m$ is computed as $p(t) = \frac{1}{|R_1|}\cdot\frac{1}{|d_{2}(t_1)|}\cdot\cdots\cdot\frac{1}{|d_{m}(t_{m-1})|}$, where $d_{i}(t_{i-1})$ is the number of tuples in $R_i$ than join with $t_{i-1}$. 
\begin{example} Consider the index graph of $J$ in Fig.~\ref{fig:indexgraph}. The probability of choosing $a_1$ is $\frac{1}{5}$. Then among the three joinable tuples with $a_1$, the probability of selecting $b_2$ is $\frac{1}{2}$. Similarly, the probability of selecting $c_1$  is $\frac{1}{3}$. Therefore, the probability of obtaining tuple $a_1 \bowtie b_2 \bowtie c_1$ is 
$p(a_1 \bowtie b_2 \bowtie c_1) = \frac{1}{5} \times \frac{1}{2} \times \frac{1}{3}$. 
\end{example}


Suppose we have obtained a sample $S$ of size $m$ from a join path $J$. 
Following Horvitz-Thompson estimator~\cite{horvitz1952gsw}, the estimated join size of $J$ based on sample $S$, namely $|J|_S$  can be evaluated as  
$|J|_S = \sum_{t \in S} \frac{1}{p(t_k)}\cdot\frac{1}{m}$~\cite{2016_wander_join}. 
We can update this estimation in real-time as new join samples are obtained. 
Suppose  a new tuple $t_0$ is added to $S$, we can update the join size estimation as follows. 
\begin{align*}
    |J|_{S\cup t_0} &=  \frac{\sum_{t \in S} \frac{1}{p(t_k)} + \frac{1}{p(t_0)}}{(m+1)}
    = \frac{\sum_{t \in S} \frac{1}{p(t_k)}}{m} + \frac{\frac{m}{p(t_0)} - \sum_{t \in S} \frac{1}{p(t_k)}}{(m+1)m}\\
    &= |J|_S + \frac{1}{m+1} \left( \frac{1}{p(t_0)} - |J|_S\right)
\end{align*}
We revisit the mean and variance of $|J|$ later in the discussion of random walk  overlap. 
Hence, a real-time approximate answer is returned with some confidence level, and the accuracy improves as the sample size grows larger. 
Extending from wander join, we have two methods to estimate the overlap sizes.

First, we set an $\alpha$ as a parameter, which is the confidence level we want to achieve. There is a confidence level value $z_\alpha$ corresponding to the $\alpha$. The half-width of the confidence interval is $ \frac{z_\alpha \cdot \sigma}{\sqrt{n}}$, where $n$ is the sample size and $\sigma$, is the standard deviation of the sample set. We terminate the sampling when the half-width becomes less than the threshold we defined.

\subsection{Overlap of Joins}
\label{sec:onlineoverlap}

We described an  algorithm based on random walks for sampling a join and estimating a join size. Given a set $\Delta\in S$ of join paths, we would like to estimate the overlap of joins in $\Delta$, namely $\mathcal{O}_{\Delta}$. 
Let $S_j = \{t_1, t_2, \dots, t_m\}$ denote a collection of sampled tuples from join $J_j\in\Delta$. Let $count(t)$ be the number of occurrences of tuple $t$ in a set. We define  $S_j^\prime$ such that for each tuple $t$ in $S_j$, $S_j^\prime$ contains exactly $\frac{1}{p(t)}$ number of such tuple $t$, i.e.,  
    $S_j^\prime = \{ t \in S_j \mid count(t)=\frac{1}{p(t)} \}$. 
Thus, sample $S_j^\prime$ preserves the distribution of $J_j$. 
We assume uniformity, over overlap, and non-overlap regions among join paths, that is we sample tuples and estimate join sizes by performing random walks,  for any  $J_j\in\Delta$, we have 
$\frac{|\mathcal{O}_\Delta|}{|J_j|} = \frac{|\bigcap_{J_j\in\Delta}S_j^\prime|}{|S_j^\prime|}$. Therefore, a join overlap size is estimated on the fly as follows. 

\begin{equation}\label{eq:onlineoverlap}
|\mathcal{O}_\Delta| = |\bigcap_{J_j\in\Delta}J_j| = |J_j|\cdot   \frac{|\bigcap_{J_i\in\Delta}S_i^\prime|}{|S_j^\prime|}
\end{equation}
How to get the $|\bigcap_{J_i\in\Delta}S_i^\prime|$? We fix a $J_j\in\Delta$ and continually sample from this single source, forming the $S_j$. In each round, if we accept the sample $t$, then we check every$J_i\in\Delta$, where $i \neq j$ to see where $t$ is contained in $J_i$. Since we already have the index for each $J_i$(stored in hash tables), this operation could be cheap since it just requires $(N-1)\times (M-1)$ queries with key, where $N=|\Delta|$ and $M$ is the number of tables in a join path. If $t$ is in every $J_i$, we include it into $\bigcap_{J_i\in\Delta}S_i^\prime$.
We can now plug in this estimation in Theorem~\ref{Ajk} to compute the union size of joins in $\Delta$.  
Next, we compute the confidence interval for  $|\mathcal{O}_\Delta|$. The variance of $|\bigcap_{J_i\in\Delta}S_i^\prime|/|S_j^\prime|$, denoted by $\sigma_j^2$, can be computed by a binomial sampling, with a variance of $\hat{p_j}(1-\hat{p_j})$ and mean of $\hat{p_j}$~\cite{lee2008introduction}. Li et al. showed the mean and variance of $|J_j|$, denoted by $\phi_j^2$, are $T_{n}^j(u)(=\frac{1}{n-1}\sum_{i=1}^n f^j(i) )$ and  $T_{n,2}^j(u)(=\frac{1}{n-1}\sum_{i=1}^n (f^j(i)- T_n^j(f) )^2
)$, respectively~\cite{2016_wander_join}.  Assuming these terms are independent, we have the variance of $|\mathcal{O}_\Delta|$ as follows.  
$$\sigma_{|\mathcal{O}_\Delta|}^2=T_{n,2}^j(u) \cdot \hat{p_j} \cdot (1-\hat{p_j})+ T_{n,2}^j(u) \cdot \hat{p_j}+ T_{n}^j(u) \cdot \hat{p_j} \cdot (1-\hat{p_j})$$ 

This gives us the following confidence interval for  $|\mathcal{O}_\Delta|$ of Eq.~\ref{eq:onlineoverlap}. 
\begin{equation}\label{eq:confidence}
E = z\cdot\sqrt{
\begin{split}\frac{1}{n}\sum_{J_j\in\Delta} (&T_{n,2}^j(u) \cdot \hat{p_j} \cdot (1-\hat{p_j})+\\
&T_{n,2}^j(u) \cdot \hat{p_j}+ (T_{n}^j(u) \cdot \hat{p_j} \cdot (1-\hat{p_j})) 
\end{split}
}
\end{equation}

This means to obtain a $90\%$ confidence on overlap estimation, the algorithm requires a sample size of  $(\frac{1.96\cdot z}{E}\cdot\sigma_{|\mathcal{O}_\Delta|})^2$, on average.

Note that our estimator for overlap, using random walks, is unbiased. We first guarantee uniformity by adding $\frac{1}{p(t)}$ number of tuple $t$ to the collection $S_j$. We know we have the following. 
\begin{equation*}
    \lim_{|S_j| \to \infty}\frac{|\bigcap_{J_j\in\Delta}S_j^\prime|}{|S_j^\prime|} = \frac{\lim_{|S_j| \to |\Delta|}|\bigcap_{J_j\in\Delta}S_j^\prime|}{\lim_{|S_j| \to |\Delta|}|S_j^\prime|}= \frac{|\bigcap_{J_j\in\Delta}J_j|}{|J_j|}
\end{equation*}
Therefore, we can show that our result gets more and more accurate when $|S_j|$ gets larger and equals the exact result when $|S_j| = |\Delta|$. As the accuracy of overlap estimation gets closer to the true values, we also obtain a better estimation for the union size, which shows that our estimator improves for values used in our algorithms.

\eat{
\subsection{Discussion:  Direct vs. Random-walk Estimation}

We proposed a way of offsetting the cost of online estimation by reusing samples taken during the warm-up phase. The online estimation algorithm improves the  efficiency of sampling over the union of joins, by finding a tighter bound on join size and union size and having a lower rejection rate. However, it heavily relies on accessing the whole data and building a join graph which requires having index structures built in advance. Therefore,  online estimation is more applicable to settings of datasets residing in disk-based and in-memory database management systems, while direct estimation can be applied to data in the wild or scenarios, such as data markets, when access to the whole data is infeasible. }

\section{Online Union Sampling}
\label{sec:onlineunion}

The \direct method  has almost zero setup cost but  low sampling efficiency, while the \randomwalk method requires some sampling cost during the warm-up phase, but yields better estimation and  efficiency. 
To design a sampling algorithm with a minimal setup cost and high sampling efficiency, we introduce an online union sampling algorithm as illustrated in Algorithm~\ref{algsetback}. At a high level, join and union size estimation is performed in an online manner as the union of joins is being sampled. Algorithm~\ref{algsetback} extends Algorithm~\ref{algNB} with two optimizations: sample reuse and backtracking with parameter update. It initializes join and union parameters using the \direct method, then, continues with  selecting joins and sampling joins using the \randomwalk method. At each iteration, obtained samples are used to further refine estimations using the join and union estimation proposed in \S~\ref{sec:onlinejoinsize}.  

\textbf{Sample Reuse} (lines 8-10 of Algorithm~\ref{algsetback}) This makes up for the overhead of the \randomwalk.  Recall the  tuples sampled by \randomwalk  are not uniform, however, with an extra accept/reject step we can reuse them in the main sampling phase. For each join, we keep track of every tuple $t$ and its  probability $p(t)$, computed during join sampling  as described in \S~\ref{sec:onlinejoinsize}.  
Suppose we have already sampled $S = \{t_1, t_2, \dots, t_l\}, t_i \in J_j$, from $J_j$. Recall $S$ may have duplicates, i.e., there exists $i,j$ s.t. $t_i=t_j$. Then, if we choose $J_j$, we can first randomly choose a tuple $t$ from $t_1, t_2, \dots, t_l$, but we only accept it with probability $\frac{l}{p(t)\cdot|J_j|}$. In this way, the algorithm guarantees that the reused  $t$ is sampled from $J_j$ with probability $p(t)\cdot\frac{1}{l}\cdot\frac{l}{p(t)\cdot|J_j|} = \frac{1}{|J_j|}$ which ensures uniformity of sampling over the union. Note that if we accept $t$, we do not return $t$  to the pool, i.e., it is a sample without replacement process and $l$ is changing. Once we use all the tuples we stored, the next time $J_j$ is selected, we simply sample over join using the techniques of \S~\ref{sec:samplingjoin}. 

\eat{
\subsection{Reuse of Samples}
\label{sec:reuse}
Although the \fn{random walk technique gives us  better estimation and guarantees than the direct method, it requires sampling during the warm-up phase. To make up for this overhead, we propose to reuse the  samples used for estimation. Note that the sampled tuples are not uniform, however, with an extra accept/reject step we can reuse them in the main sampling over the union phase.}

Lines 8-10 in Algorithm~\ref{algsetback} illustrate the step of reusing warm-up samples. Suppose we store for each join every tuple $t$ and its sampling probability $p(t)$, computed during the warm-up phase, \fn{as described in \S~\ref{sec:onlinejoinsize}}. 
During the main sampling phase, after selecting a join, using any of the Algorithm~\ref{algB} or~\ref{algNB}, 
we use the stored tuples instead of sampling the selected join. However, we need to add a rejection rate. Suppose we sampled $S = \{t_1, t_2, \dots, t_l\}, t_i \in J_j$, for estimating $J_j$. Recall $S$ may have duplicates, i.e., there exists $i,j$ s.t. $t_i=t_j$. Then, if we choose $J_j$, we can first randomly choose a tuple $t$ from $t_1, t_2, \dots, t_l$, but we reject it with probability $1 - \frac{l}{p(t)\cdot|J_j|}$. In this way, we obtain $t$ with probability $p(t)\cdot\frac{1}{l}\cdot\frac{l}{p(t)\cdot|J_j|} = \frac{1}{|J_j|}$ and ensure uniformity. \fn{Note that if we accept $t$, we do not return $t$  to the pool, i.e., it is a sample without replacement process and $l$ is changing. Once we use all the tuples we stored, the next time $J_j$ is selected, we simply sample over join using the techniques of \S~\ref{sec:samplingjoin}. }
}

\eat{
Another way is to work with unique pre-sampled tuples which is consistent with our assumption of joins containing no duplicates. Suppose we obtained a set of samples $Q = \{t_1, t_2, \dots, t_l\}, t_i\in J_j$ during warm-up. 
Upon selecting $J_j$, we  randomly choose a tuple from $Q$ and reject with probability $1 - |Q|/|J_j|$. \todo{what is the diff between |Q| and l?} This guarantees uniformity because the probability of a tuple would be $1/l\cdot|Q|/|J_j|$. Note that for both approaches, if we accept $t$, we do return $t$  to the pool, i.e., it is a sample without replacement process and $l$ is changing. Once we use all the tuples we stored, the next time $J_j$ is selected, we simply sample over join using the techniques of \S~\ref{sec:samplingjoin}.} 

Note that the acceptance rate, namely $R$,   can be equal to and greater than 
$1$. This means the algorithm may return more than one instance of $t$ in a certain round, while still ensuring the uniforming condition. 
We define the $r_i$ as the probability that $i$ instances of $t$ be accepted in a certain round. That is, $\sum_{i}^n r_i \cdot i =R$, where $\sum_{i}^n r_i = 1,~0\leq r_i \leq 1$. Then,  we have to choose the number of instances, $n \in N^+ $, by choosing one of the many valid solutions of this system. 

\eat{
\begin{algorithm}
\caption{Backtracking}
\begin{small}
\begin{algorithmic}[1]
\Procedure{Backtrack}{$T,\;P$}
\State $\{|J_j|', 1\ \leq j \leq n\}, |U|', \yl{conf\_level} \leftarrow $ {\sc OnlineUpdate} ($P$)
\For{each tuple $t$ in $T$}
\State remove $t$ from $T$ with probability $1 - \frac{|J(t.val)'|/|U|'}{|J(t.val)|/|U|}$
\EndFor
\State \textbf{return} $\{|J_j|', 1\ \leq j \leq m\}, |U|', T, \yl{conf\_level}$
\EndProcedure
\Procedure{OnlineUpdate}{$P$}
\For{each $J_j$}
\State $|J_j| \leftarrow |J_j| = \sum_{p \in P[j]} p \cdot \frac{1}{m}$
\EndFor
\State $|U|, \yl{conf\_level} \leftarrow $ {\sc CalcU}($P$) \Comment{Follow \S~\ref{sec:setunionsize} and \S~\ref{sec:online}}
\State \textbf{return} $\{|J_j|, 1\ \leq j \leq m\}, |U|, \yl{conf\_level}$
\EndProcedure
\end{algorithmic}
\label{algback}
\end{small}
\end{algorithm}
}

\begin{algorithm}
\caption{Set Union Sampling with Reuse and Backtracking}
\begin{small}
\begin{algorithmic}[1]
\Require{Join paths $\{J_j, 1\ \leq j \leq m\}$, tuple count $N$, backtrack para $\phi$, \yl{target confidence level $\gamma$}}
\Ensure{Tuples $\{t_i, 1 \leq i \leq N\}$}
\State $\{|J_j|, 1\ \leq j \leq n\}, |U|,  \leftarrow warmup(S)$, $\{|J_j'|\} \leftarrow\fn{cover(S)}$
\State $\yl{conf\_level \leftarrow 0}$, $T \leftarrow \{\}$
\Comment{result sample}
\State $P \leftarrow [j][]$
\Comment{record probability of selected tuples from each join path}
\State $orig\_join_i \leftarrow \{\}$ \Comment{record of original join of seen tuples}
\While{$n < N$}
\State{select $J_j$ with probability $\frac{|J_j'|}{|U|}$}
\If{$S_j \neq \emptyset$}
\State  sample $t\in S_j$, accept with  $\frac{l}{p(t)\cdot|J_j|}$, remove $t$ from $S_j$
\EndIf
\If{$S_j = \emptyset$ or $t$ from $S_j$ is rejected}
\State{$t \leftarrow$ a random sample from $J_j$}
\EndIf
\If{$t \in orig\_join_i $ for any $i < j$} reject $t$
\Else
\If{$t \in orig\_join_i $ for any $i > j$} 
\Comment{\fn{revision}}
\State remove $t$ from $orig\_join_i$ \fn{ and add $t$ to $orig\_join_j$}
\State remove all $t$'s from $T$ and delete $P[i][t]$'s
\EndIf
\If{$t \notin orig\_join_i$} add $t$ to $orig\_join_j$
\EndIf
\State $T \leftarrow T \cup \{t\}$ and update $P[j][t]$
\EndIf
\If{$\sum_{j\in[m]}|P[j]| \; \% \; \phi == 0$ \yl{and $conf\_level < \gamma$}} 
\State $\{|J_j|, 1\ \leq j \leq m\}, |U|, T$
\State $conf\_level \leftarrow  {\sc Update} (T, P)$
\Comment{backtrack (\S~\ref{sec:onlineunion})}
\EndIf
\EndWhile
\State \textbf{return} $T$
\end{algorithmic}
\label{algsetback}
\end{small}
\end{algorithm}

\textbf{Backtracking with Parameter Update} In \S~\ref{sec:joinoverlap}, we show that despite the small overhead of the  \direct method and its usefulness, the \direct method  may not be an unbiased estimator of our sampling parameters. 
Moreover, in \S~\ref{sec:online}, we proved that the \randomwalk  is an unbiased estimator whose parameter estimations converge to the true values after infinitely many numbers of samples. 
Algorithm~\ref{algsetback} initializes the framework with the estimation of the \direct method and  refines the parameters by  applying \randomwalk.  
The caveat is that, with this refinement strategy, although at each round the probability of sampled tuples  is uniform and equal to $1/|U|$,  the uniformity of tuples sampled across rounds is not guaranteed since the estimation of $|U|$ changes from one round to another with more random walks. To mitigate this non-uniformity, we introduce a backtracking trick which is an accept/reject strategy for all already sampled tuples in previous rounds.  

Algorithm~\ref{algsetback} initializes $T$ to be the set of  result samples and  initializes a list $P$ to store, $p(t)$'s, the probabilities of tuples obtained from a join  either it is accepted, rejected, or when the random walk fails. We also specify a parameter $\phi$, which indicates how often we backtrack. During the sampling process, 
we record all $p(t)$'s regardless of  $t$ being a rejected or reused tuple or being the  result of a failed random walk (in this case, $p(t) = 0$). 
Every $\phi$ iterations, i.e., $\phi$ recorded $p(t)$'s, we update join, overlap, and union estimations following the \randomwalk method then perform backtracking following Algorithm~\ref{algback} to adjust the probability of previously sampled tuples based on the new estimation of $|U|$. 

During backtracking, we iterate over all previously sampled tuples in the result  and adjust their probabilities by rejecting tuple $t$ with probability  $\frac{|J(t.val)^\prime|/|U|^\prime}{|J(t.val)|/|U|}$, where $|J(t.val)|$ and $|U|$ are original values, and $|J(t.val)|^\prime$ and $|U|$ are updated values. It is not hard to see that the backtracking algorithm guarantees that each tuple in the result is sampled with $\frac{1}{|U|^\prime}$. We also keep track of the confidence level $\gamma$ of the estimated sizes and stop backtracking when the accuracy is beyond a predefined threshold. 
\fn{\section{Other Types of Joins}}
\label{sec:others}

In this section, we show how to generalize our sampling framework to acyclic joins. 
The join  subroutine of our algorithm relies on an existing algorithm. The work by Zhao et al. provides a way of random sampling over join of all types: chain,  acyclic, and acyclic~\cite{2018_sample_join_revisit}. The two discussed instantiations of our framework propose different ways of estimating join overlap size parameters. The \randomwalk method relies on samples obtained from joins for estimation and handles acyclic and cyclic joins in the subroutine of join sampling.  
For brevity, we do not repeat the algorithm of Zhao et al. and describe how we extend the \direct method to acyclic and cyclic joins.  

\subsection{Acyclic Joins}
\label{sec:acyclic}
  
We organize the relations in a join tree, where each node refers to a relation 
and each edge  denotes a join. Figure~\ref{fig:acyclicjointree} illustrates an example of a join tree. 
The basic idea in extending our sampling algorithm to acyclic 
joins is to transform all acyclic joins and chain joins in the union to the base case of equi-length chain joins and use the results of \S~\ref{sec:overlapmulti} to estimate join overlaps. 
Our solution involves first building a {\em standard template} of joins. 
A template is a join tree structure  to which the structure of every  join can be converted. 
We formalize the standard template as a chain join that contains  relations of two attributes. 
The reason we need the template is that the degree-based comparison which is necessary for the size estimation of \S~\ref{sec:overlapmulti} can only be applied when relations have exactly the sample structure.

\newcommand{\rulesep}{\unskip\ \vrule\ }
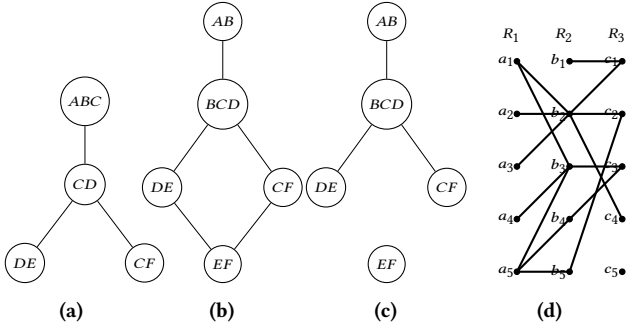
\begin{figure}
    \tiny
    \raggedright
    \begin{subfigure}[b]{.1\textwidth}
    \begin{tikzcd}[column sep=.5cm, row sep=.5cm, cells={nodes={draw, circle}}]
        &[-.3cm]A B C\arrow[dash]{d}&[-.3cm]\\
        &[-.3cm] C D\arrow[dash]{dl}\arrow[dash]{dr}&[-.3cm] \\
        D E && C F
    \end{tikzcd}
    \caption{}
    \label{fig:templatejoin}
    \end{subfigure}
    \begin{subfigure}[b]{.12\textwidth}
    \begin{tikzcd}[column sep=.5cm, row sep=.5cm, cells={nodes={draw, circle}}]
        &[-.3cm]A B\arrow[dash]{d}&[-.3cm]\\
        &[-.3cm]B C D\arrow[dash]{dl}\arrow[dash]{dr}&[-.3cm] \\
        DE\arrow[to=X, dash] && CF\arrow[to=X, dash]\\
        &[-.3cm]|[alias=X]| E F
    \end{tikzcd}
    \caption{}
    \label{fig:cyclicjointree}
    \end{subfigure}
    \begin{subfigure}[b]{.12\textwidth}
    \begin{tikzcd}[column sep=.5cm, row sep=.5cm, cells={nodes={draw, circle}}]
        &[-.3cm]A B\arrow[dash]{d}&[-.3cm]\\
        &[-.3cm]B C D\arrow[dash]{dl}\arrow[dash]{dr}&[-.3cm] \\
        DE && CF\\
        &[-.3cm]|[alias=X]| E F
    \end{tikzcd}
    \caption{}
    \label{fig:acyclicjointree}
    \end{subfigure}
\begin{subfigure}[b]{.12\textwidth}
\raggedleft
\begin{tikzpicture}[scale=0.7]
\draw[fill=black] (0,4) circle (1.5pt); 
\draw[fill=black] (1,4) circle (1.5pt); 
\draw[fill=black] (2,4) circle (1.5pt); 
\draw[fill=black] (0,3) circle (1.5pt); 
\draw[fill=black] (1,3) circle (1.5pt); 
\draw[fill=black] (2,3) circle (1.5pt); 
\draw[fill=black] (0,2) circle (1.5pt); 
\draw[fill=black] (1,2) circle (1.5pt); 
\draw[fill=black] (2,2) circle (1.5pt); 
\draw[fill=black] (0,1) circle (1.5pt); 
\draw[fill=black] (1,1) circle (1.5pt); 
\draw[fill=black] (2,1) circle (1.5pt); 
\draw[fill=black] (0,0) circle (1.5pt); 
\draw[fill=black] (1,0) circle (1.5pt); 
\draw[fill=black] (2,0) circle (1.5pt); 
\node at (-0.1,4.5) {$R_1$};
\node at (0.9,4.5) {$R_2$};
\node at (1.9,4.5) {$R_3$};
\node at (-0.2,4) {$a_{1}$};
\node at (0.8,4) {$b_{1}$};
\node at (1.8,4) {$c_{1}$};
\node at (-0.2,3) {$a_{2}$};
\node at (0.8,3) {$b_{2}$};
\node at (1.8,3) {$c_{2}$};
\node at (-0.2,2) {$a_{3}$};
\node at (0.8,2) {$b_{3}$};
\node at (1.8,2) {$c_{3}$};
\node at (-0.2,1) {$a_{4}$};
\node at (0.8,1) {$b_{4}$};
\node at (1.8,1) {$c_{4}$};
\node at (-0.2,0) {$a_{5}$};
\node at (0.8,0) {$b_{5}$};
\node at (1.8,0) {$c_{5}$};
\draw[thick]  (0,4) -- (1,3); 
\draw[thick]  (0,4) -- (1,2); 
\draw[thick]  (0,3) -- (1,3); 
\draw[thick]  (0,2) -- (1,3); 
\draw[thick]  (0,1) -- (1,2); 
\draw[thick]  (0,0) -- (1,2); 
\draw[thick]  (0,0) -- (1,1); 
\draw[thick]  (0,0) -- (1,0); 
\draw[thick]  (1,4) -- (2,4); 
\draw[thick]  (1,3) -- (2,4); 
\draw[thick]  (1,3) -- (2,1); 
\draw[thick]  (1,3) -- (2,3); 
\draw[thick]  (1,2) -- (2,2); 
\draw[thick]  (1,1) -- (2,2); 
\draw[thick]  (1,0) -- (2,3); 
\end{tikzpicture}
\caption{}
\label{fig:indexgraph}
\end{subfigure}
\eat{
\begin{subfigure}[b]{.14\textwidth}
\centering
\begin{tikzpicture}[scale=0.7]
\draw[fill=black] (0,4) circle (1.5pt); 
\draw[fill=black] (1,4) circle (1.5pt); 
\draw[fill=black] (2,4) circle (1.5pt); 
\draw[fill=black] (0,3) circle (1.5pt); 
\draw[fill=black] (1,3) circle (1.5pt); 
\draw[fill=black] (2,3) circle (1.5pt); 
\draw[fill=black] (0,2) circle (1.5pt); 
\draw[fill=black] (1,2) circle (1.5pt); 
\draw[fill=black] (2,2) circle (1.5pt); 
\draw[fill=black] (0,1) circle (1.5pt); 
\draw[fill=black] (1,1) circle (1.5pt); 
\draw[fill=black] (2,1) circle (1.5pt); 
\draw[fill=black] (0,0) circle (1.5pt); 
\draw[fill=black] (1,0) circle (1.5pt); 
\draw[fill=black] (2,0) circle (1.5pt); 
\node at (-0.1,4.5) {$R_1'$};
\node at (0.9,4.5) {$R_2'$};
\node at (1.9,4.5) {$R_3'$};
\node at (-0.2,4) {$a_{1}$};
\node at (0.8,4) {$b_{1}$};
\node at (1.8,4) {$c_{1}$};
\node at (-0.2,3) {$a_{2}$};
\node at (0.8,3) {$b_{2}$};
\node at (1.8,3) {$c_{2}$};
\node at (-0.2,2) {$a_{3}$};
\node at (0.8,2) {$b_{3}$};
\node at (1.8,2) {$c_{3}$};
\node at (-0.2,1) {$a_{4}$};
\node at (0.8,1) {$b_{4}$};
\node at (1.8,1) {$c_{4}$};
\node at (-0.2,0) {$a_{5}$};
\node at (0.8,0) {$b_{5}$};
\node at (1.8,0) {$c_{5}$};
\draw[thick]  (0,4) -- (1,3); 
\draw[thick]  (0,4) -- (1,1); 
\draw[thick]  (0,2) -- (1,3); 
\draw[thick]  (0,1) -- (1,2); 
\draw[thick]  (0,1) -- (1,0); 
\draw[thick]  (0,0) -- (1,1); 
\draw[thick]  (1,4) -- (2,3); 
\draw[thick]  (1,4) -- (2,2); 
\draw[thick]  (1,3) -- (2,2); 
\draw[thick]  (1,3) -- (2,4); 
\draw[thick]  (1,2) -- (2,1); 
\draw[thick]  (1,1) -- (2,0); 
\draw[thick]  (1,0) -- (2,2); 
\end{tikzpicture}
\caption{}
\end{subfigure}
\rulesep
\begin{subfigure}[b]{.15\textwidth}
\begin{tikzpicture}
[
    level 1/.style = {sibling distance = 1.8cm, level distance = 1cm},
    level 2/.style = {sibling distance = 0.9cm, level distance = 1cm
    }
]
\small
\node {Null}
    child {node {A B (0)}
    child {node {A C (3)}
    child {node {B C (6)}}}
    child {node {B C (6)}
    child {node {A C (3)}}}}
    child {node {A C (3)}
    child {node {A B (0)}
    child {node {B C (6)}}}
    child {node {B C (6)}
    child {node {A B (0)}}}}
    child {node {B C (6)}
    child {node {A B (0)}
    child {node {B C (6)}}}
    child {node {A C (3)}
    child {node {A B (0)}}}};
\end{tikzpicture}
\caption{}
\label{fig:splittree}
\end{subfigure}
}
    \caption{(a) Acyclic join, (b) Cyclic join, (c) Tree structure for overlap estimation of (b), (d) Skeleton join  and residual joins for random sampling (d') Join data graph of $J$}
    \label{fig:mix}
\vspace{-5mm}
\end{figure}

To rewrite an acyclic join as a base chain join, we first construct the equivalent join tree such that a breadth-first traversal, always starting from the left-most node in each level, gives us  joins of the same  schema for all trees. 
A chain join is indeed a join tree with one branch. 
Joins may result in different tree structures. Therefore, we next need to choose a standard tree structure (template) before decomposing them  into base chain joins. A good template  is important in the estimation process. A bad template  can lead us to the worst bound results of $\min_{j\in[n]} |J_j|$. 

\begin{example} Consider the join in Fig.~\ref{fig:templatejoin}.  Suppose we choose the  template of $(A,D) \bowtie (A,C) \bowtie (B,C) \bowtie (B,E) \bowtie (E,F)$. To obtain $(B,E)$, we need to estimate the size of $(A,B,C) \bowtie (C,D) \bowtie (D, E)$; to obtain $(E, F)$, we need to estimate the size of $(D, E) \bowtie (C, D) \bowtie (C, F)$. Since we also need to estimate the fake join size, these two estimations between relations lose lots of information. However, the template $(A,B) \bowtie (B,C) \bowtie (C,D) \bowtie (D,E) \bowtie (E,F)$ gives us a better bound as we only use the pre-estimation for relations once to obtain $(E, F)$.
\end{example} 

It is not hard to notice that if we want to preserve most of the structure of the original relations, we prefer  templates that put  attributes in their original relations.  
We formulate the problem of finding a standard  template for a collection of chain and cyclic joins as the problem of splitting joins into  two-attribute relations such that the total pairwise distance of attributes in the same relation, in the tree of the template, is minimized. 

\subsubsection{Pairwise attributes score} 
Suppose all $J$'s in $S$ result in tables with attributes $\mathcal{D}$. For any pair of attributes $A, A^\prime \in \mathcal{D}$, let $Dist_j(A, A^\prime)$ be the distance between node(relation)s of $A$ and $A^\prime$ in join tree for $J_j$. Note that the distance between two attributes $A$ and $A^\prime$ is equivalent to the number of joins we need to perform   to obtain $(A, A^\prime)$ in a template. Then, we define the score between $A$ and $A^\prime$ as 
    $score(A, A^\prime) = \sum_{j\in[n]}Dist_j(A, A^\prime)$. 

Again consider Figure~\ref{fig:templatejoin}. We have $score(A, B) = 0 + 0 + 0 = 0$, which has the highest priority when we select a table for the standard. Moreover, $score(A, F) = 2 + 3 + 2 = 7$ represents that $A$ and $F$ are far from each other and have a small possibility to appear together in the original tables. Thus, pairs with a lower score have a higher possibility of originally being in the same table. 
The lower the score is, the higher the priority. We form all the pairs as a tree, where the root is an empty node and each path from the root to a leaf is an eligible path after eliminating the empty root node. For example, if the resulting table has schema $\mathcal{D} = \{A, B, C\}$, and $(A,B) = 0, (A,C) = 3, (B,C) = 6$, the tree will be formed as shown in Fig.~\ref{fig:splittree}. 
\begin{center}
\begin{tikzpicture}
[
    level 1/.style = {sibling distance = 1.8cm, level distance = 1cm},
    level 2/.style = {sibling distance = 0.9cm, level distance = 1cm
    }
]
\small
\node {Null}
    child {node {A B (0)}
    child {node {A C (3)}
    child {node {B C (6)}}}
    child {node {B C (6)}
    child {node {A C (3)}}}}
    child {node {A C (3)}
    child {node {A B (0)}
    child {node {B C (6)}}}
    child {node {B C (6)}
    child {node {A B (0)}}}}
    child {node {B C (6)}
    child {node {A B (0)}
    child {node {B C (6)}}}
    child {node {A C (3)}
    child {node {A B (0)}}}};
\label{fig:splittree}
\end{tikzpicture}
\end{center}
We want the standard template to have the lowest score, so we can convert the problem to finding the minimum cost path which can be solved recursively. 

\subsubsection{Alternating score} Another thing worth noticing is that split relations and joins without estimating sub-join size preserve most information, so we may give weights to the case with $Dist_j(A, A^\prime) = 0$. We can view the score for this case as a hyper-parameter that can be tuned for finding the tightest bound.

Given a standard template, we now introduce how acyclic and cyclic joins can be converted while preserving information for "fake join"s. Consider the tree structure acyclic join. Suppose node for $R_i$ has $k$ number of children, $R_{i_1}, R_{i_2}, \dots, R_{i_k}$, and we have an extreme case of the template where each table $R_{i_j}$ has one attribute that is paired with an attribute in $R_i$. In this case, we do {\em fake join} on each $R_{i,j}^\prime = R_i \bowtie^\prime R_{i_j} \bowtie^\prime Childs(R_{i_j})$ and estimate $|R_{i,j}^\prime|$ using the method in \S~\ref{sec:joinsize}. In this step, we also record the estimated maximum degree in each attribute $A$ in $R_{i_j}^\prime$ as follows:
\begin{align*}
    M_{A}(R_{i_j}^\prime) = \left\{\begin{array}{l}
    M_{A}(R_i) \cdot M_{A}(R_{i_j})\; \text{ if }\; A \text{ is join attribute}\\
    \\
    \max\{M_{A}(R_i), M_{A}(R_{i_j})\}\; \text{ otherwise }
\end{array}\right.
\end{align*}
Through this way, we can split $R_{i_j}^\prime$ according to the standard template and with information on both cardinality and maximum degrees. Moreover, we are able to estimate the overlap size accordingly. Note that we do not necessarily need to fake join all the child nodes with their parent for transformation, as in real scenarios, we select the children based on the schemes of relations in the standard template.

\subsection{Cyclic Joins}
\label{sec:cyclic}

In this section, we extend our sampling algorithm to cyclic queries.
Following the method proposed in~\cite{2018_sample_join_revisit}, we break  all the cycles in the join hyper-graph by removing a 
subset of relations so that the join becomes a connected and acyclic
join. 
The residual join, namely $\mathcal{S}_R$, is the set of removed relations and the skeleton join, namely $\mathcal{S}_M$, is the set of relations in the main acyclic join. Fig.~\ref{fig:acyclicjointree} shows the equivalent skeleton join tree and residual join to the cyclic join of Fig.~\ref{fig:cyclicjointree}. 
Let attributes in $\mathcal{S}_R$ be $Attr(\mathcal{S}_R)$, and attributes in $\mathcal{S}_M$ as $Attr(\mathcal{S}_M)$. We treat $\mathcal{S}_R$ as a single relation in the new acyclic join. We can even materialize $\mathcal{S}_R$ by performing joins in $\mathcal{S}_R$. Note that some attributes in $Attr(\mathcal{S}_R)$ from the residual $\mathcal{S}_R$ may be joined with $Attr(\mathcal{S}_M)$. This means we have an acyclic join (the skeleton join) and a residual that can be joined with two or more relations in the skeleton. Now the maximum degree $M(\mathcal{S}_R)$ of any attribute in $\mathcal{S}_R$  is defined as follows. 
\begin{equation*}
    M(\mathcal{S}_R) = \max_{v_i \in A_i}\vert t:t\in \mathcal{S}_R, \pi_{A_i}(t) = v_i, \forall A_i \in Attr(\mathcal{S}_M) \cap Attr(\mathcal{S}_R)\vert
\end{equation*}
Since we treat the residual as one relation, 
with the degree information, we can estimate the join size and  overlap size by breaking  $S_R$ into the base chain join structure, as described in \S~\ref{sec:acyclic}. 
Note that  the choice of set or relations to remove can have a significant influence on performance.  We follow the methods used by Zhao et al.~\cite{2018_sample_join_revisit} to decide where to break the cycle  in practice.

\subsection{Selection Predicates}
\label{sec:filterpred}

Our sampling algorithms can  support selection predicates in  two ways. The first alternative is to push down the predicates to relations, i.e., we filter each relation with the predicates, during the pre-processing, and work with filtered relations during sampling. 
This paradigm works for both \direct and \randomwalk.  
Another alternative is to enforce the selection predicate during the sampling process. This paradigm works with only \randomwalk, unless the  \direct method has access to the selectivity degree of the predicate and can adjust the degree statistics. Since this paradigm adds an additional rejection factor, it is most appropriate for selection predicates that are not very selective. 
\section{Evaluation}
\label{sec:eval}

\begin{figure*}[!ht]
    \begin{subfigure}[t]{0.245\linewidth}
        	\centering
            \includegraphics[width=\linewidth]{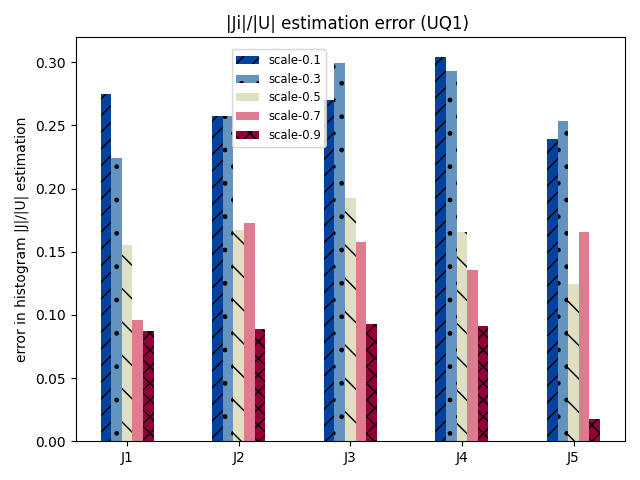}
        	\vspace{-6mm}
        	\caption{}
            \label{fig:errorration-uq1}
    \end{subfigure}
    \begin{subfigure}[t]{0.245\linewidth}
        	\centering
            \includegraphics[width=\linewidth]{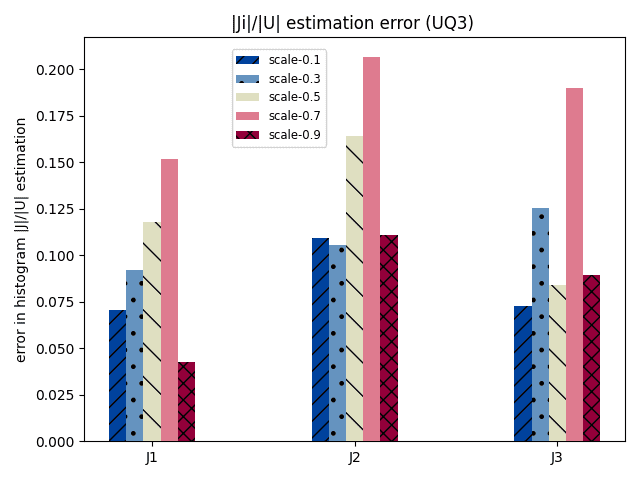}
        	\vspace{-6mm}
        	\caption{}
            \label{fig:errorration-uq3}
    \end{subfigure}
    \begin{subfigure}[t]{0.245\linewidth}
        	\centering
            \includegraphics[width=\linewidth]{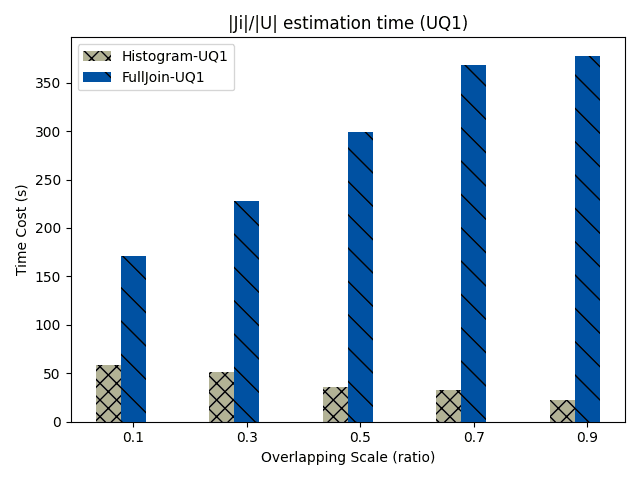}
        	\vspace{-6mm}
        	\caption{}
            \label{fig:union-time-uq1}
    \end{subfigure}
    \begin{subfigure}[t]{0.245\linewidth}
        	\centering
            \includegraphics[width=\linewidth]{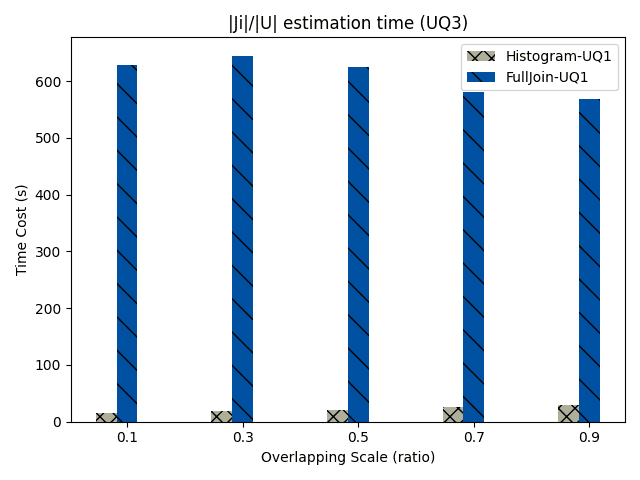}
        	\vspace{-6mm}
        	\caption{}
            \label{fig:union-time-uq3}
    \end{subfigure}
\caption{The error of join to union size ratio estimation using \direct+{\sc EO} on (a) UQ1 and (b) UQ3; runtime of union size estimation using \direct and {\sc FullJoin} on (c) UQ1 and (d) UQ3.} 
    \label{fig:errorration-a}
\end{figure*}

\textbf{Datasets:} We use three datasets consisting of different types of joins tailored from the TPC-H benchmark. Each query workload is to sample from the union of joins in a dataset. \yl{{\sc UQ1} consists of five chain joins, where each has five relations: {\tt \small nation}, {\tt \small supplier}, {\tt \small customer}, {\tt \small orders}, and {\tt \small  lineitem}; {\sc UQ2} consists of three chain joins which use: {\tt \small region}, {\tt \small nation}, {\tt \small supplier}, {\tt \small partsupp}, and {\tt \small part}, where we also add selection predicates following $Q_2^N \cup Q_2^P \cup Q_2^S$ in~\cite{CarmeliZBKS20}; and, {\sc UQ3} has one acyclic join and two chain joins.} {\sc UQ3} is derived from  relations:  {\tt \small supplier}, {\tt \small customer}, and {\tt \small orders}. We split them vertically and horizontally to get  relations with different schemas. Therefore, working with {\sc UQ3} involves the application of the splitting method. 

To experiment with the scale of data, we use TPCH-DBGen to generate relations with various scales. For example, with TPC-H scale factor $N$-gb, and $K\%$ scale ratio, {\sc UQ3} is a dataset of size $K\% \cdot N \cdot 3$. \yl{For {\sc UQ2}, we have the same data for three joins but have different constraints for selection predicates. Hence, {\sc UQ2} has a large overlap scale.} We also vary the overlap scale $P\%$ between joins of  {\sc UQ1}. When generating different queries, we keep $P\%$ of the data the same in the original corresponding relations. This way, although we cannot ensure that the overlap ratio in queries is exactly $P\%$,  given unknown information between relations, we can guarantee that the overlap ratio between queries is  proportional to the overlap scale. Note that we did not perform experiments on  cyclic joins queries, particularly because transforming cyclic  to acyclic joins and online sampling from cyclic join is done based on an existing work~\cite{2018_sample_join_revisit}. 

\noindent\textbf{Algorithms} \yl{
We evaluate the \direct and a \randomwalk instantiations 
by plugging in techniques 
of \S~\ref{sec:joinoverlap} and 
\S~\ref{sec:online}, respectively, in Theorem~\ref{Ajk}. 
The join estimation of \direct can be instantiated by baselines {\sc EW} (Exact Weight)~\cite{2018_sample_join_revisit}, which is the ground truth for weights by calculating the exact weight of each tuple in the join data graph, or {\sc EO} (Extended Olken's)~\cite{2018_sample_join_revisit}, which we described in \S~\ref{sec:samplingjoin}.
The join estimation of the online technique uses our  \randomwalk of \S~\ref{sec:online}. 
We also consider {\sc FullJoinUnion} as the ground truth for our join size and union size estimations. This algorithm performs the full join and computes the union.}
Note that {\sc FullJoinUnion} is extremely expensive on large datasets. Our experiments timed out on data sizes of more than 5GB (per relation). 
We do not evaluate {\sc DisjoinUnion} since it is 
consistent with sampling over one join path as it has no extra delays. we do not evaluate the Bernoulli set union sampling since it is a slightly different variation of the Non-Bernoulli and has lower efficiency theoretically. 

\noindent\textbf{Implementation:}  The framework  is implemented in Python.  Relations in joins are stored in hash relations with a linear search. Acyclic joins are implemented in a tree structure and acyclic joins are handled by recursion. 
All experiments are conducted on a machine
with 2 Intel\textsuperscript{\textregistered} Xeon Gold 5218 @ 2.30GHz (64 cores), 512 GB DDR4 memory, a Samsung\textsuperscript{\textregistered} SSD 983 DCT M.2 (2 TB), 4 GPUs - TU102 (GeForce RTX 2080 Ti).

\subsection{Join and Union Size Approximation}

\subsubsection{Error} 
We evaluate the estimation error of the ratio $|J_i|/|U|$ for each join in a query, because our algorithms rely on this ratio to define probability distributions over joins. 
For these experiments, we use {\sc UQ1} and {\sc UQ3} with 3GB scale raw data. After preprocessing, {\sc UQ1} is $9$GB and {\sc UQ3} is $5.4$GB.The overlap scale is set to $0.2$. Fig.~\ref{fig:errorration-uq1} and~\ref{fig:errorration-uq3} show the ratio estimation error for {\sc UQ1} and {\sc UQ3}, with respect to overlap scale, using \direct method. 

{\em For large overlap scales, the error tends to be small and stable. For smaller scales, the performance is unstable.}
This is because when the overlap scale is small, 
small samples will have a large effect on the estimation performance. However, when we have a large scale of overlap, which is  our use case, 
the randomness will be removed. 
Besides, we observe that the average error for {\sc UQ3}, in  Fig.~\ref{fig:errorration-uq3}, is better than {\sc UQ1}, in  Fig.~\ref{fig:errorration-uq1}. As we take an upper bound for every join, our \direct  method gains higher accuracy on joins with a smaller length. Given that {\sc UQ3} is smaller both in length and numbers, this explains why the estimation is relatively more accurate for {\sc UQ3}. 

\eat{
\todo{next paragraph candidate for cutting} 

Additionally, we employ the split method on {\sc UQ3}. In practice, splitting relations does not affect the accuracy of estimating overlaps. However, for acyclic and cyclic joins, integrating relations, viewing them as a whole, and splitting them  incurs  some information  loss. In most cases, if we can preserve the main branch of the join and pre-process on smaller branches, and decrease the number of integration, the split method will have little impact on estimating the overlap. 
}

\subsubsection{Runtime} 
We  report the runtime of our parameter estimation  methods, in Fig.~\ref{fig:union-time-uq1} and~\ref{fig:union-time-uq3}. 
First, {\em \direct is significantly faster than the brute-force full join.}  
Second, for {\sc UQ1}, we observe that as the cost of full join increases with overlap scale, the time \direct method needs  becomes less. This is because when the overlap scale is large and the overlapping structure is complex, it becomes harder for the full join to scan over data, but for our method, a higher overlap scale instead accelerates our method in finding the tuple with the maximum degree. 

\yl{Unlike the \direct technique, our \randomwalk technique collects sampling statistics during the warm-up phase. When evaluating the confidence level of the overlap size, we are actually evaluating the ratio that the overlap part takes in the join, i.e.,  $\frac{|\bigcap_{J_i\in\Delta}S_i^\prime|}{|S_j^\prime|}$. In Eq.~\ref{eq:onlineoverlap}, we take $|J_j|$ as an exact value to fulfill the assumption of independence. This is in fact equivalent to having the confidence level of $|J_j|$ as $1$ and confidence interval as $0$, which is an approximation of the case given by Wande Join~\cite{2016_wander_join}. We terminate online sampling when the confidence level reaches $90\%$ or we obtain 1,000 samples.}

\yl{Fig.~\ref{fig:uq1_overlap} compares the performance of \direct with {\sc EO}~\cite{2018_sample_join_revisit, Olken}, as join size instantiation, with \randomwalk, in terms of the error of join to union size ratio estimation on {\sc UQ1}. We used a data scale of 3GB for each query. 
First, {\em \randomwalk outperforms \direct; in fact, \randomwalk is extremely accurate and stable and has an error close to zero for all joins.}  This is because the nature of indexing will give us extra information about overlapping.
We remark that the accurate estimation comes at the cost of sampling during the warm-up phase. We will discuss the empirical evaluation of the sampling technique that reuses these samples, shortly. Besides, {\em while the estimation error is quite robust across joins, the higher the overlap, the more accurate  \direct becomes.} Since the accuracy of overlap size estimation heavily depends on the overlap size of samples we collect, the larger the actual overlap is, the easier we find overlap in samples, As we take the minimum in each step as the upper bound of overlap size, the bound gets tighter when overlap size approaches data size, which results in more accurate results in overlap size estimation. Nevertheless, though \randomwalk has better performance, \direct is relatively faster and can be applied to databases without index structures. }

\subsection{Set Union Sampling}

\begin{figure*}[!ht]
    \begin{subfigure}[t]{0.245\linewidth}
        	\centering
            \includegraphics[width=\linewidth]{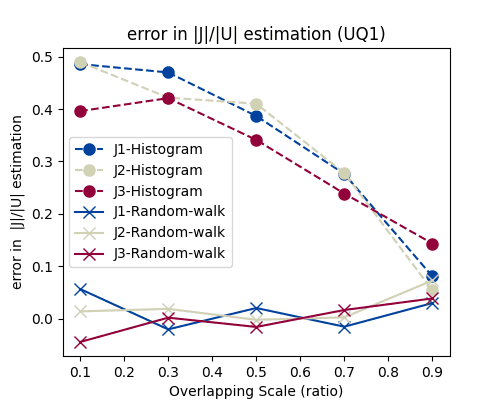}
        	\vspace{-6mm}
        	\caption{}
            \label{fig:uq1_overlap}
    \end{subfigure}
      \begin{subfigure}[t]{0.245\linewidth}
        	\centering
            \includegraphics[width=1.08\linewidth]{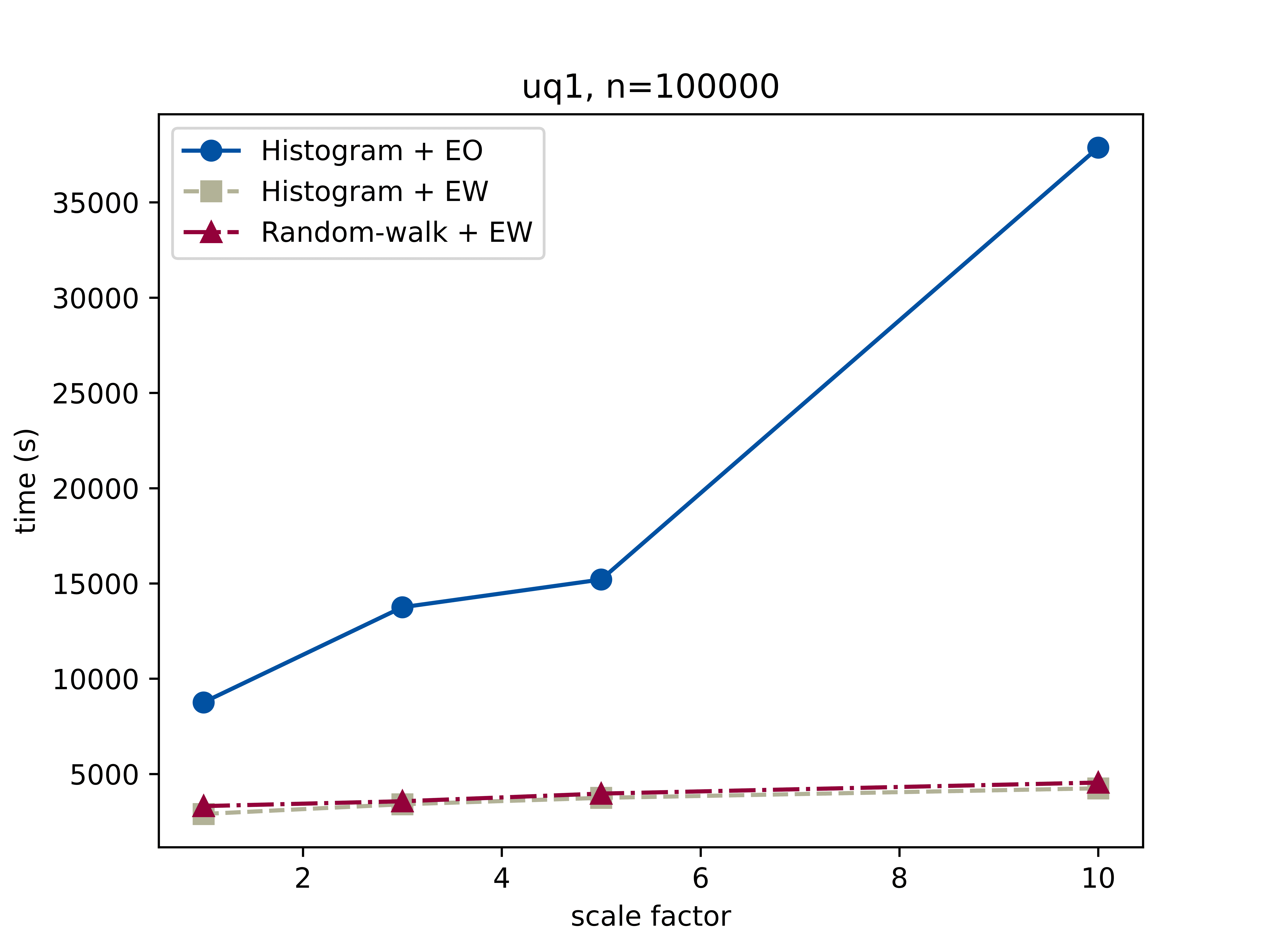}
        	\vspace{-6mm}
        	\caption{}
            \label{fig:uq1_nb_5_scale}
    \end{subfigure}
    \begin{subfigure}[t]{0.245\linewidth}
        	\centering
            \includegraphics[width=1.08\linewidth]{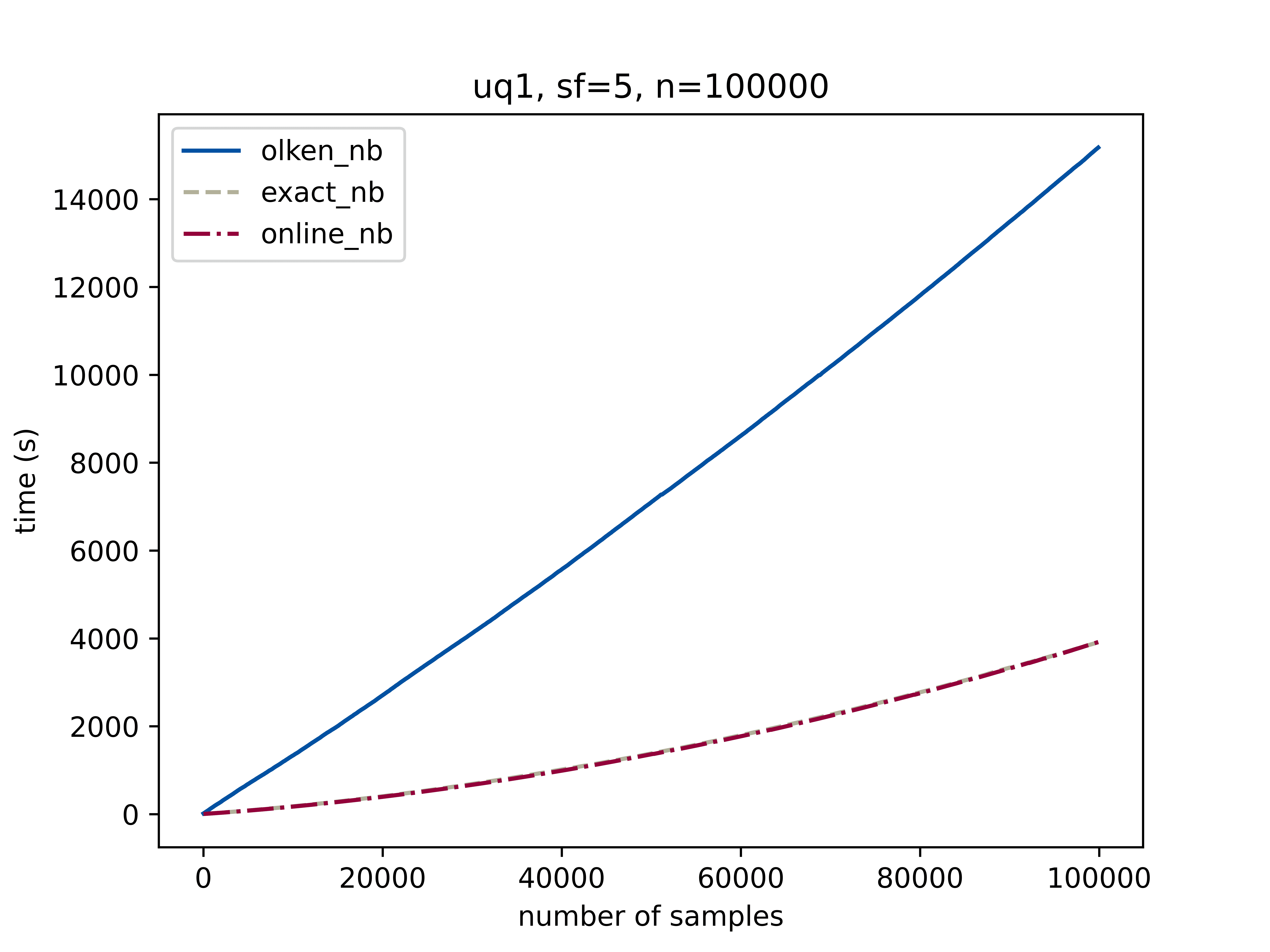}
        	\vspace{-6mm}
        	\caption{}
            \label{fig:uq1_nb_5_time}
    \end{subfigure}
    \begin{subfigure}[t]{0.245\linewidth}
        	\centering
            \includegraphics[width=1.08\linewidth]{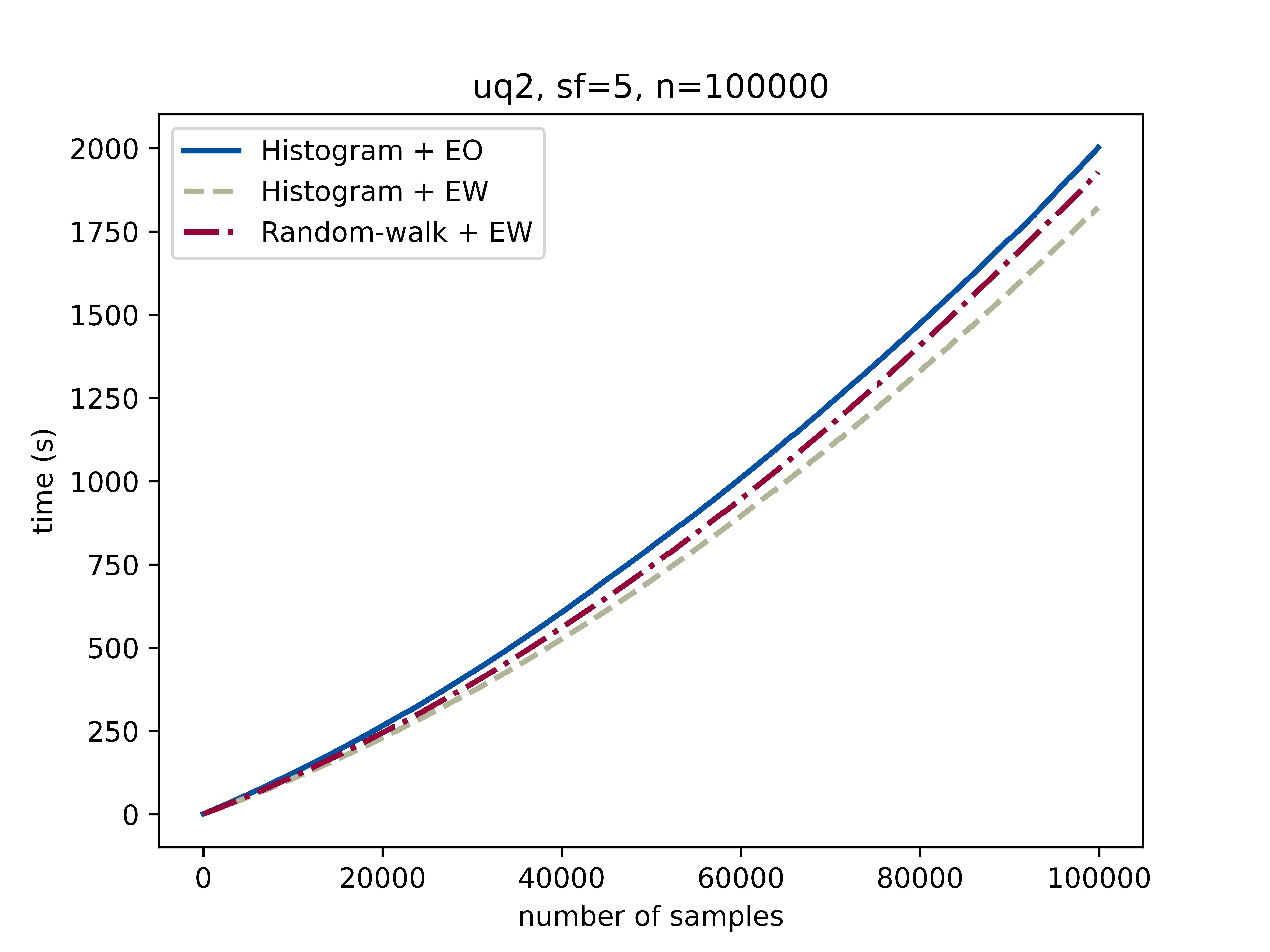}
        	\vspace{-6mm}
        	\caption{}
            \label{fig:uq2_nb_5_time}
    \end{subfigure}
    \hfill
    \begin{subfigure}[t]{0.245\linewidth}
        	\centering
            \includegraphics[width=1.08\linewidth]{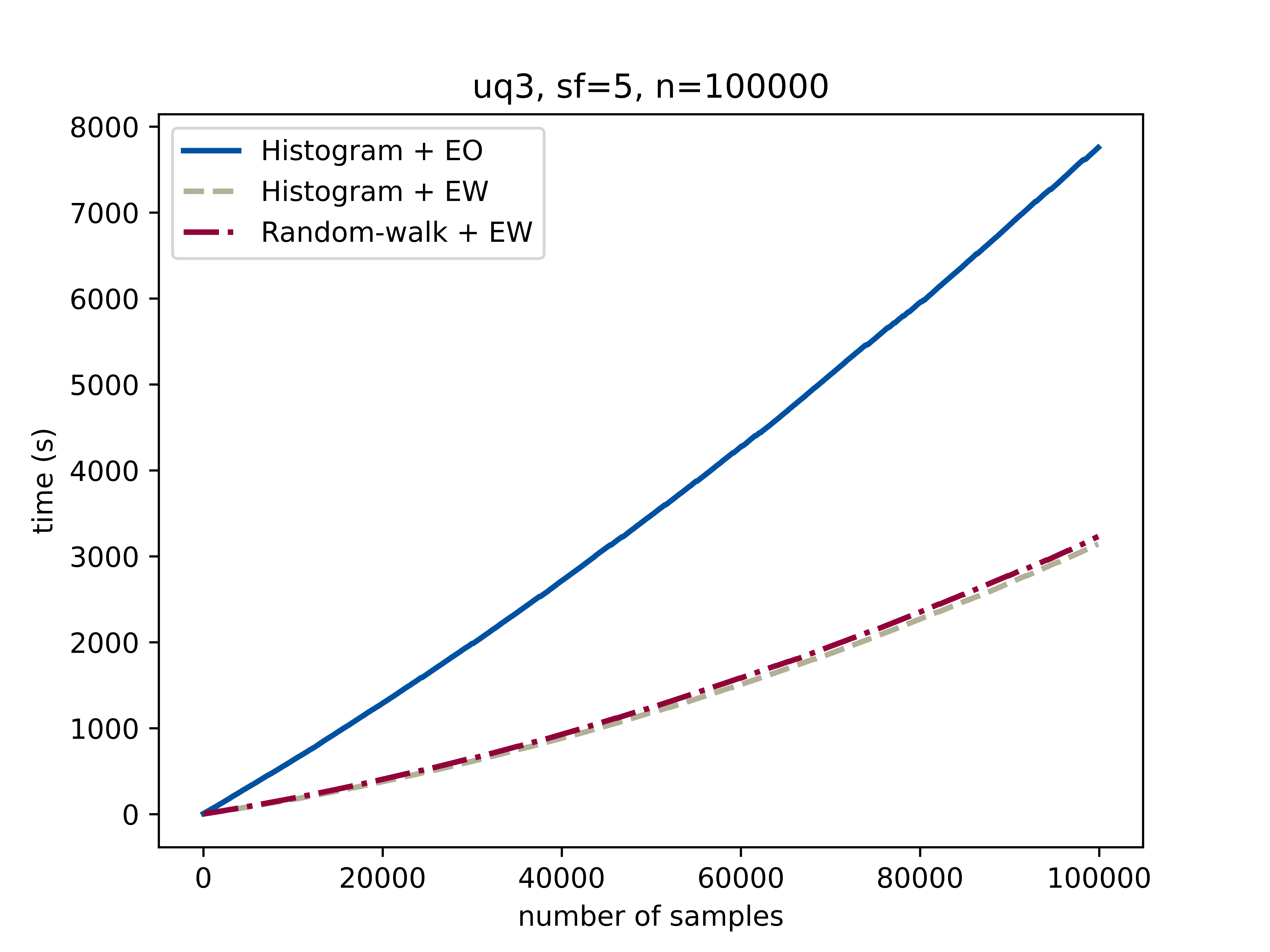}
        	\vspace{-6mm}
        	\caption{}
            \label{fig:uq3_nb_5_time}
    \end{subfigure}
    \begin{subfigure}[t]{0.245\linewidth}
        	\centering
            \includegraphics[width=\linewidth]{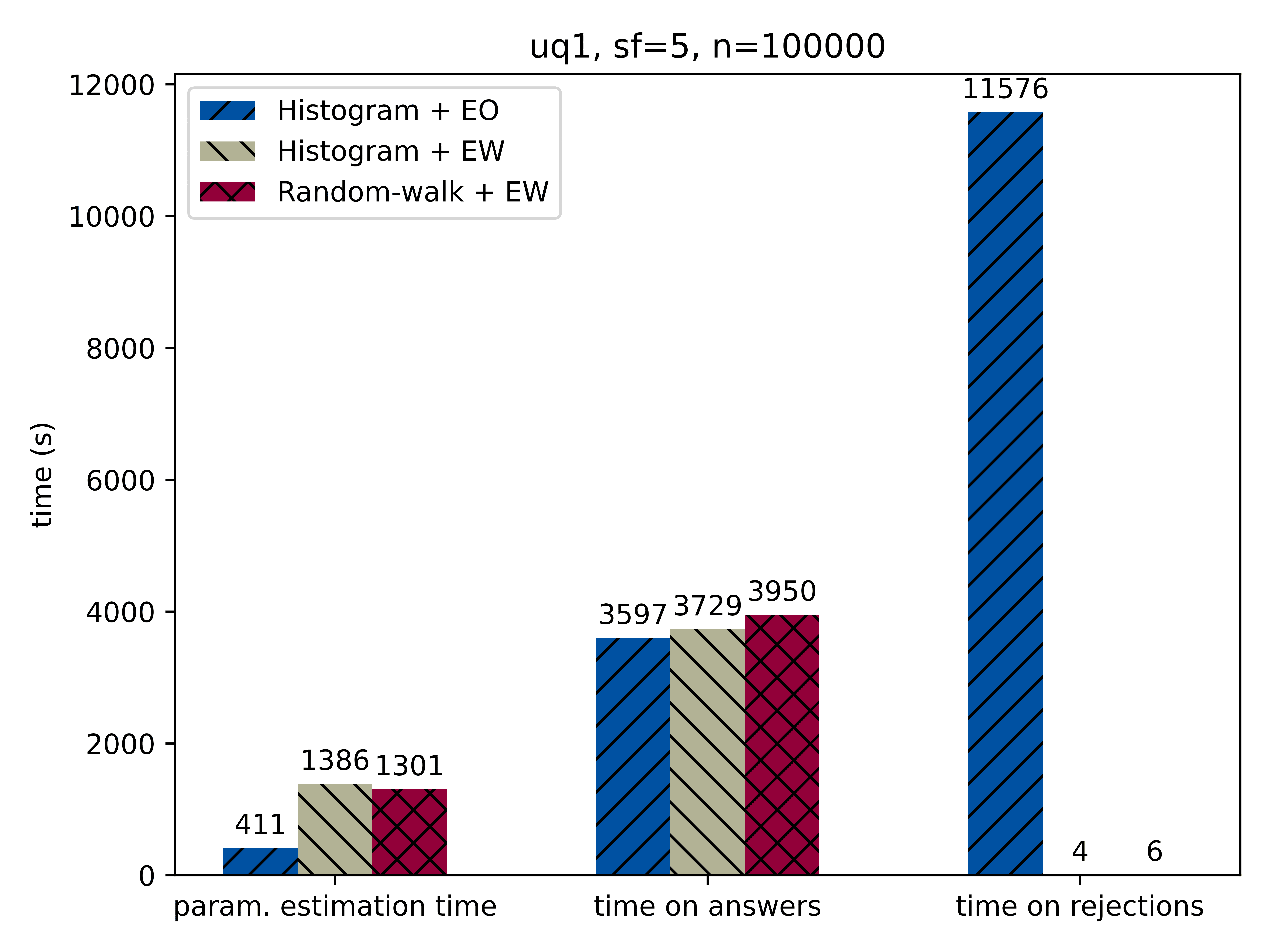}
        	\vspace{-6mm}
        	\caption{}
            \label{fig:uq1_nb_5_ratio}
    \end{subfigure}
      \begin{subfigure}[t]{0.245\linewidth}
        	\centering
            \includegraphics[width=\linewidth]{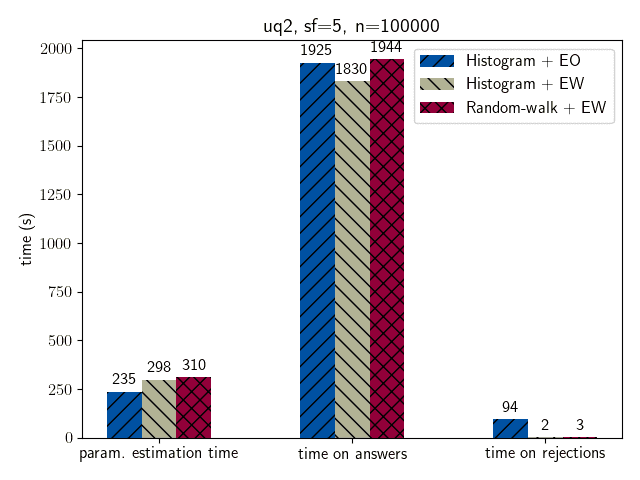}
        	\vspace{-6mm}
        	\caption{}
            \label{fig:uq2_nb_5_ratio}
    \end{subfigure}
    \begin{subfigure}[t]{0.245\linewidth}
        	\centering
            \includegraphics[width=\linewidth]{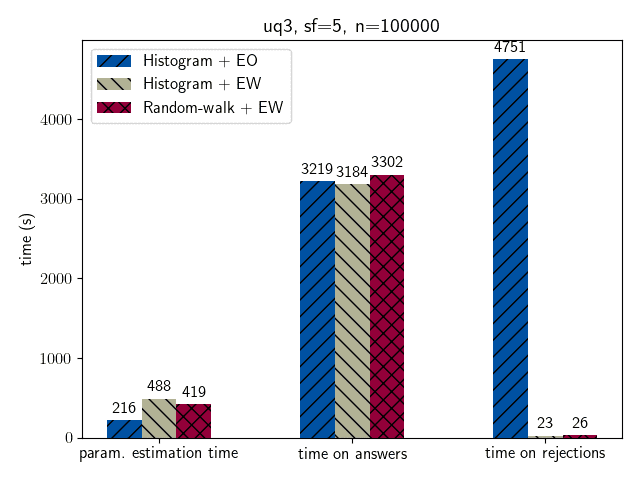}
        	\vspace{-6mm}
        	\caption{}
            \label{fig:uq3_nb_5_ratio}
    \end{subfigure}
    \caption{(a) the error of join to union size ratio; 
    (b) {\sc SetUnion} time vs. data scale on {\sc UQ1};
    runtime vs. sample size on (c)  {\sc UQ1} and (d) {\sc UQ2}; (e) {\sc UQ3}; time breakdown of (f) {\sc UQ1} (g) {\sc UQ2} and (h) {\sc UQ3}.}
    \label{fig:disunion-time}
\end{figure*}

\subsubsection{Scaling with Number of Samples} 

\sloppy{
For \direct, We use both {\sc EW} and {\sc EO} methods for weights initialization in sampling from a single join, and we only use {\sc EW} for \randomwalk. 
First, Fig.~\ref{fig:uq1_nb_5_time},~\ref{fig:uq2_nb_5_time}, and~\ref{fig:uq3_nb_5_time} show how {\sc SetUnion} scale with number of samples. 
Overall, we can see that when using {\sc EW} instantiation, \direct and \randomwalk have nearly no difference in performance. In other words, {\em the accuracy of the estimation bound has little impact on sampling efficiency.} However, for \direct, using {\sc EW} results in a much slower situation than using {\sc EO} on all three queries, since with exact weights calculated, we obtain a rejection rate of zero.} 

\subsubsection{Runtime Breakdown}

Fig.~\ref{fig:uq1_nb_5_ratio},~\ref{fig:uq2_nb_5_ratio}, and~\ref{fig:uq3_nb_5_ratio} shows the comparisons of time spent on parameter  estimation(join size, overlap, and weights), producing accepted answers and on producing rejected answers. The reason for the decay comes from {\sc EO}, as well as the fact that we need to reject duplicate tuples that are sampled from a join  different from what it is assigned to. From these  plots, the most significant finding is that though using {\sc EO} is much less efficient than using {\sc EW}, it has better performance in the warm-up phase. Moreover,   since it uses the upper bound of weights for sampling from a join, it has an extra rejection phase and needs to spend much more time on rejected answers than using {\sc EW}. Besides, the time spent on accepted answers is similar for three combinations of instantiations for all queries. Moreover, 
{\em our {\sc SetUnion} algorithm spends minor time rejecting duplicate tuples and has very high efficiency when using {\sc EW} for join sampling}. 

\subsubsection{Scaling with Relation Size}

\yl{Although we use $scale factor = 5$ for all three queries, we will get different sizes of unions if we perform full joins due to different numbers of relations and different levels of overlaps. From our set out, the order of union size for three queries from large to small is {\sc UQ1}, {\sc UQ3}, {\sc UQ2}. From both sets of plots, we notice that sampling time is in proportion to the resulting union size. What's more, when the expected union size is small, as for {\sc UQ2}, {\sc EO} has a relatively smaller gap with {\sc EW} during sampling, and has an even better advantage in the warm-up phase.}

\yl{Moreover, Fig.~\ref{fig:uq1_nb_5_scale} reports sampling time for various data scales for {\sc UQ1}. The first observation is that using {\sc EO} for join size estimation makes both algorithms slower than using {\sc EW} and overall {\sc EW} scales better with the size of data since with exact weights the rejection rate for sampling from single join path is $0$. Second, though the sampling time of both algorithms increases with the size of data, the scale has a much larger effect on {\sc EO} than {\sc EW}. As the size of each relation grows larger, a tuple has a higher rejection rate due to the growth of the number of tuples in the relation to be joined with. Finally, initialization in union size using either \direct or \randomwalk has little impact on efficiency, which is consistent with the conclusion we obtained earlier.}

\subsection{Online Union Sampling with Sample Reuse}
\label{sec:evalonline}

\begin{figure}[]
\begin{subfigure}[t]{0.495\columnwidth}
        	\centering
            \includegraphics[width=1.08\linewidth]{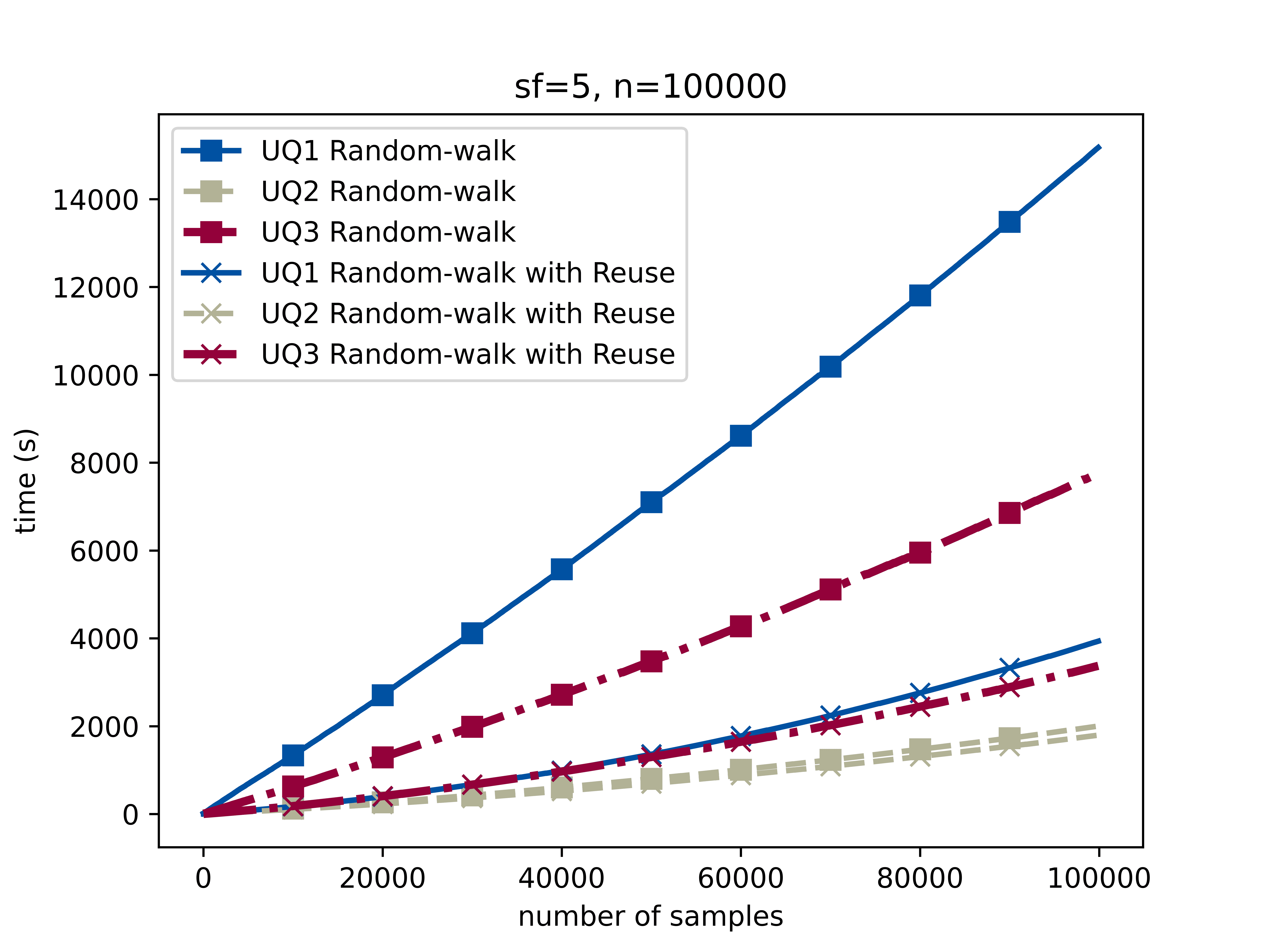}
        	\vspace{-6mm}
        	\caption{}
            \label{fig:reuse_time}
    \end{subfigure}
\begin{subfigure}[t]{0.495\columnwidth} 
\centering
\includegraphics[width=1.08\linewidth]{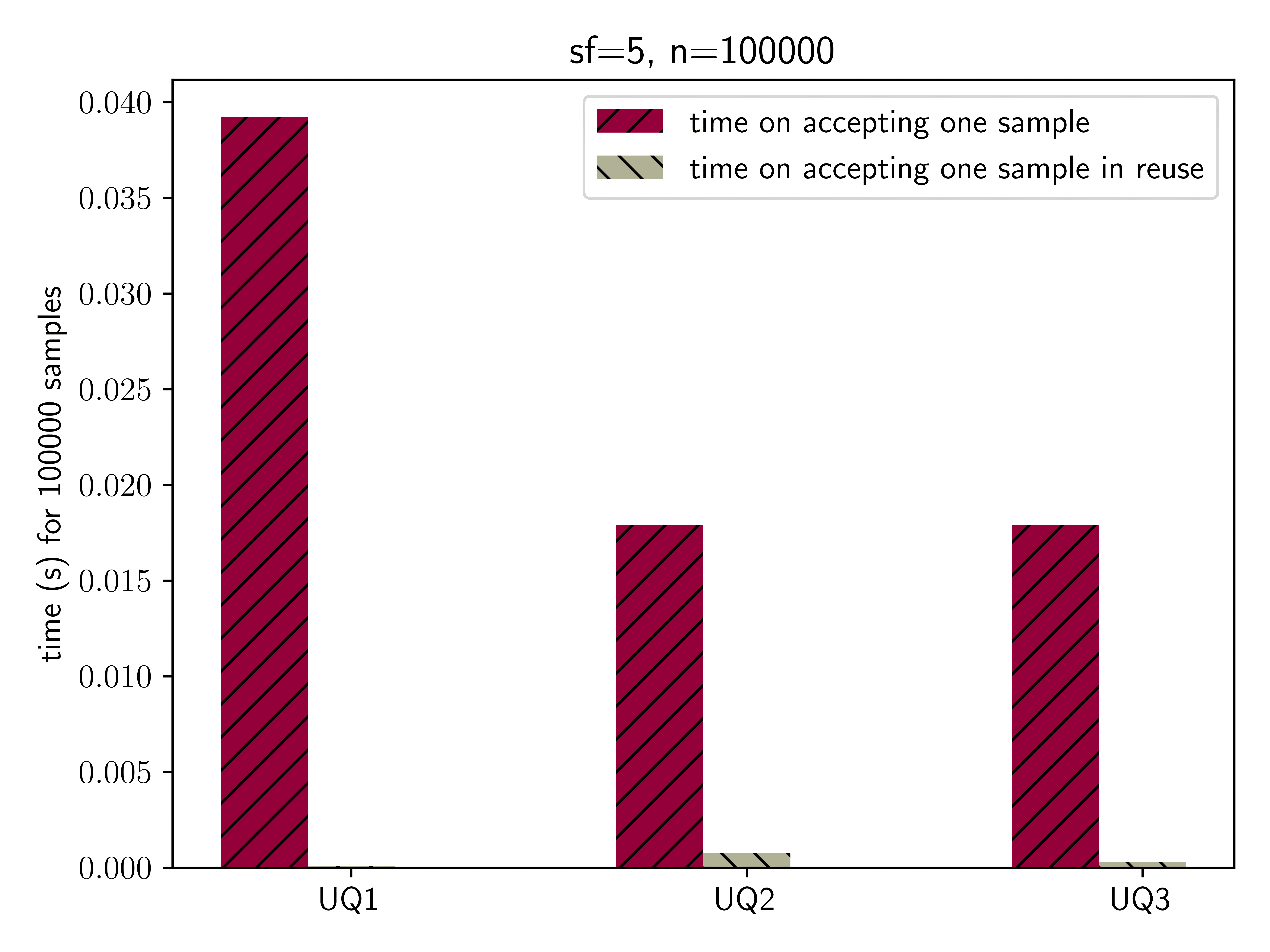}
\vspace{-6mm}
        	\caption{}
    \label{fig:reuse_ratio_per_sample}
\end{subfigure}
    \caption{(a) time vs. sample size 
    with and without reuse (b) time per sample spent in a regular phase vs. a reuse phase.}
    \label{fig:reuse}
    \vspace{-5mm}
\end{figure}

\yl{In the next set of experiments, 
we evaluate the runtime of the \randomwalk sampling using the idea of reusing samples collected during warm-up. We compare \randomwalk with and without reuse on all three queries. Fig.~\ref{fig:reuse_time} shows sampling time with respect to sample size. First, we can clearly observe that we have much higher efficiency when we sample with reuse. When we sample from the pool of pre-sampled and joined tuples during the warm-up phase, we only do a fast check on rejection or acceptance and do not need to sample over each relation. 
Moreover, there is a slight change in slope on lines of sampling with reuse cases. When pre-sampled samples are all used, 
the performance of {\sc SetUnion} will slowly converge to their original performance. One other interesting phenomenon is that the reuse of samples has a more apparent increase in performance when the expected union size is larger. For {\sc UQ1}, there is a huge gap between with and without reuse; but for {\sc UQ2}, the gap is much smaller. 
Fig.~\ref{fig:reuse_ratio_per_sample} compares to time spent on successfully accepting one tuple in the regular sampling phase and in the reuse sampling phase. We use the ratio of total time spent on sampling and the number of successfully sampled tuples for each phase for comparison, and we can see that when we reuse pre-sampled tuples, we have much higher efficiency. This shows the huge improvement in efficiency brought by our online union sampling.  
}

\section{Related Work} 
\label{sec:relatedwork}

\textbf{Random Access to Query Results} The closest problem to ours is random access to the results of conjunctive queries. 
Bagan et al. show that the free-connex acyclic conjunctive queries can be evaluated using an  enumeration algorithm with a constant delay between consecutive answers, at the cost of a linear-time preprocessing phase~\cite{BaganDG07}. However, because this work does not guarantee the randomness of the intermediate answers, the produced result may have extreme bias, making it unsuitable for learning tasks. Recently, Carmeli et al. studied the problem of enumerating the answers of the union of acyclic conjunctive queries in a uniformly random order~\cite{CarmeliZBKS20}. The proposed algorithm requires full access to the database, i.e., the computation of the full joins as well as a linear pre-processing time in the size of the database. As such, this algorithm is not applicable to random sampling over open data, data markets, proprietary databases, or web databases where the access model is tuple-at-a-time access. Unlike the approach of  Carmeli et al. which requires  computing the exact join  and overlap sizes, our framework presents  sampling strategies and ways  of  approximating   these parameters using  simple statistics, such as degrees, in our direct method or a subset of random samples in our online method. 

\textbf{Random Sampling over Joins} The problem of random sampling over a single join path was posed in the 1990s~\cite{joinsynopses}. Acharya et al. proposed a solution for good approximate answers  using only random samples from the base relations, but accuracy still remained to be improved~\cite{joinsynopses}. Joining random samples of joins  produces a much smaller sample size than samples. Moreover, 
it is shown that join samples obtained do not satisfy the  independence requirement~\cite{HuangYPM19}. 
To solve this, Olken proposed the idea of rejecting join of two  samples with specific probabilities for two-table join~\cite{Olken}; Chaudhuri et al. proposed techniques that are applicable  to linear joins but not to arbitrary joins~\cite{Chaudhuri:1999}. Both methods  require full information of the tables as well as the index structure. Chaudhuri et al. significantly improved the efficiency by proposing another strategy group sample algorithm that relies on only partial statistics~\cite{Chaudhuri:1999}. However, all the above three methods only work for 2-table Joins. 
Ripple join returns dependent and uniform samples~\cite{ripple}.  
Wander join~\cite{2016_wander_join} extended  ripple join to return  independent but non-uniform samples from the join. 
Recently, Zhao et al.  proposed a framework that handles general multi-way joins and guarantees i.i.d~\cite{2018_sample_join_revisit}. This  algorithm can be plugged in our framework for random sampling over a single join path. 

\eat{
\noindent\textbf{Join Size Estimation} results on join size upper
bounds [6, 18, 22]
There has always been an interest in the database theory community in upper bounding a join size, due to the connection of this problem to query optimization. We described Chaudhuri et al.'s  algorithm~\cite{Chaudhuri:1999} and Olken’s algorithm~\cite{olken1995random}, which we extended for multi-relation joins, 
as well as Zhao et al.'s~\cite{2018_sample_join_revisit} algorithm in \S~\ref{sec:joinsize}. 
Another well-known algorithm is the AGM bound~\cite{agm}. On a general query, it requires solving a linear program. AGM returns  the optimal bound on
the join size when the only information is the size of relations.} 

\noindent\textbf{Union of Sets and Queries}  The union-of-sets problem has been studied in approximate counting literature~\cite{KarpLM89}. The goal is to design a randomized algorithm that can output an approximation of the size of the union of sets efficiently.  
Karp et al. proposed a $(1+\epsilon)$-randomized approximation algorithm
for approximating the size of the union of sets with a  linear running time. This algorithm requires the exact size of each set and a uniform random sample of each set.~\cite{KarpLM89}. Bringmann and Friedrich later applied this algorithm in designing an algorithm for
high dimensional geometric objects using uniform random sampling. They also proved that the problem is \#P-hard for high dimensional boxes~\cite{BringmannF08}. The computation of union of sets also has links to $0$-th frequency moment estimation~\cite{AlonMS96}. 
One line of work in this area is on DNF counting problem~\cite{KarpL83}, including designing hashing-based algorithms~\cite{ChakrabortyMV16, DagumKLR00, KarpLM89, MeelSV17}. 
Another popular line of work is on estimating the union of sets where  each set arrives in a streaming fashion~\cite{Bar-YossefJKST02,KaneNW10,MeelV021,Meel0V22}.

\section{Conclusion}
\label{sec:conclusion}

This paper studies two  novel problems: sampling over the union of joins and size approximation of the union of joins. A general union sampling framework is proposed that estimates join  overlap and union parameters when (1) data  statistics are available in DBMSs and (2)  access to the data in relations is feasible. The framework extends to the union size of joins of  arbitrary multi-way acyclic and cyclic. 
Interesting future work directions include analyzing the impact of data skew on approximations as well as integrating a union sampling operator into a database engine.

\clearpage

\bibliographystyle{ACM-Reference-Format}
\bibliography{ref}

\end{document}